\newcommand{\E}{\mathbb{E}}
\newcommand{\V}{\mathbb{V}}
\newcommand{\IR}{\mathbb{R}}
\newcommand{\IN}{\mathbb{N}}
\newcommand{\IP}{\mathbb{P}}
\newcommand{\1}[1]{\mathbf{1}\{#1\}}
\newcommand{\Cov}{\operatorname{Cov}}
\newcommand{\sgn}{\operatorname{sign}}
\newcommand{\sumI}{\sum_{i \in I_s}}
\newcommand{\Sum}{\sum_{i=1}^n}
\newcommand{\wh}{\widehat}
\newcommand{\wt}{\widetilde}
\newcommand{\mc}{\mathcal}
\newcommand{\sss}{\scriptscriptstyle}
\newcommand{\X}{\mathcal{X}}
\newcommand{\Xn}{\mathbb{X}_n}
\newcommand{\Xh}{\mathcal{X}_h}
\newcommand{\Zh}{\mathcal{Z}_h}
\newcommand{\Tn}{\mathcal{T}_n}
\newcommand{\hs}{\widehat{\text{se}}}
\newcommand{\hmu}{\wh\ad}
\newcommand{\ad}{\eta}
\newcommand{\Class}{\mc E}
\newcommand{\ClassVar}{\mc V}
\newcommand{\htau}{\wh{\tau}}
\newcommand{\htheta}{\wh{\theta}}
\newcommand{\hse}{\wh{\text{se}}}
\newcommand{\Sumr}{\frac{1}{R} \sum_{j\in \mc R_i}}
\newcommand{\Sumn}{\sum_{i \in [n]}}
\DeclareMathOperator*{\argmin}{argmin}
\newtheorem{proposition}{Proposition} 
\newtheorem{theorem}{Theorem}
\newtheorem{lemma}{Lemma}
\newtheorem{assumption}{Assumption}
\theoremstyle{definition}
\numberwithin{equation}{section}
\def\boxit#1{\vbox{\hrule\hbox{\vrule\kern6pt
			\vbox{\kern6pt#1\kern6pt}\kern6pt\vrule}\hrule}}
\newcolumntype{d}[1]{D{.}{.}{#1}}
\newcommand{\captionfonts}{\small}
\long\def\@makecaption#1#2{%
	\vskip\abovecaptionskip
	\sbox\@tempboxa{{\captionfonts #1: #2}}%
	\ifdim \wd\@tempboxa >\hsize
	{\captionfonts #1: #2\par}
	\else
	\hbox to\hsize{\hfil\box\@tempboxa\hfil}%
	\fi
	\vskip\belowcaptionskip}
\titleformat{\section}[block]{\centering\normalfont}{\thesection.}{0.5em}{\uppercase }
\titleformat{\subsection}[runin]{\normalfont}{\thesubsection.}{0.4em plus .1em minus .2em}{\bfseries}[.]
\titleformat{\subsubsection}[runin]{\normalfont}{\thesubsubsection.}{0.4em plus .1em minus .2em}{\it}[.]
\titlespacing*\section{0pt}{18pt plus 4pt minus 2pt}{5pt plus 2pt minus 2pt}
\titlespacing*\subsection{0pt}{10pt plus 2pt minus 1pt}{5pt plus 2pt minus 2pt }
\titlespacing*\subsubsection{0pt}{4pt plus 1pt minus 1pt}{5pt plus 2pt minus 2pt}
\def\mythanks#1{%
	\protected@xdef \@thanks {\@thanks \protect \footnotetext [\the \c@footnote ]{#1}}%
}
\title{Flexible Covariate Adjustments in  Regression Discontinuity Designs\mythanks{First version: July 16, 2021. 
		This version: \today.
		Author contact information: 
		Claudia Noack, Department of Economics, University of Bonn. E-Mail: claudia.noack@uni-bonn.de. Website: https://claudianoack.github.io.   
		Tomasz Olma, Department of Statistics, Ludwig Maximilian University of Munich. E-Mail: t.olma@lmu.de. Website: https://tomaszolma.github.io.   
		Christoph Rothe, Department of Economics, University of Mannheim.
		E-Mail: rothe@vwl.uni-mannheim.de. Website: http://www.christophrothe.net. }}
\author{Claudia Noack \and Tomasz Olma \and Christoph Rothe}
\date{}
\begin{document}
	\pagestyle{plain}
	\onehalfspacing
	
	\maketitle 
	
	\begin{abstract}
		Empirical regression discontinuity (RD) studies often include covariates in their specifications to increase the precision of their estimates. In this paper, we propose a novel class of estimators that use such covariate information more efficiently than existing methods and can accommodate many covariates. Our estimators are simple to implement and involve running a standard RD analysis after subtracting a function of the covariates from the original outcome variable. We characterize the function of the covariates that minimizes the asymptotic variance of these estimators. We also show that the conventional RD framework gives rise to a special robustness property which implies that the optimal adjustment function can be estimated flexibly  via modern machine learning techniques without affecting the first-order properties of the final RD estimator. We demonstrate our methods' scope for efficiency improvements by reanalyzing data from a large number of recently published empirical studies.
	\end{abstract}
	
	\newpage

	\doublespacing
	\section{Introduction}

Regression discontinuity (RD) designs are widely used for estimating causal effects from observational data in economics and other social sciences. The sharp RD design exploits the fact that in many contexts a unit's treatment status is determined by whether its realization of a running variable exceeds some known cutoff value. For example, students might qualify for a scholarship if their GPA is above some threshold.
Under continuity conditions on the distribution of potential outcomes, the average treatment effect at the cutoff is  identified by the jump in the conditional expectation of the outcome given the running variable at the cutoff. Estimation and inference methods based on local linear regression are widely used and their properties are by now well understood \citep[e.g.,][]{hahn2001identification, imbens2012optimal, calonico2014robust, armstrong2020simple}.
	
An RD analysis generally does not require data beyond the outcome and the running variable, but additional covariate information can be used to reduce the variance of empirical estimates. A  common approach is to include the covariates linearly and  without separate localization in a local linear RD regression \citep{calonico2019regression}. 
This conventional linear adjustment estimator is consistent without functional form assumptions on the underlying conditional expectations if the covariates are unaffected by the treatment in some appropriate sense. It does not  exploit the available covariate information efficiently though, and it is also not well-suited for settings with many covariates.

In this paper, we propose a novel class of flexible covariate-adjusted RD estimators. Our approach involves running a standard local linear RD regression after subtracting an (estimated) function of the covariates from the original outcome variable. We characterize the function that leads to the RD estimator with the smallest asymptotic variance in this class and show how this function can be estimated with modern machine learning techniques. We also show that the usual RD framework gives rise to a special robustness property which implies that our final estimator is ``very insensitive'' to estimation uncertainty about the estimated adjustment function. One can therefore directly use existing methods for bandwidth choice and inference with our adjusted outcome variable.

To motivate our proposed procedure,  let $Y_i$ and $Z_i$ denote the outcome and covariates, respectively, of observational unit $i$. Then the conventional linear adjustment RD estimator is  asymptotically equivalent to a local linear RD estimator with the modified outcome variable $Y_i -Z_i^\top \gamma_0$, where $\gamma_0$ is a vector of projection coefficients.  We consider generalizations of such estimators which replace the linearly adjusted outcome with a flexibly adjusted outcome of the form $Y_i - \ad(Z_i)$, for some generic function $\ad$. Such estimators are easily seen to be consistent for \emph{any} fixed $\ad$ if the distribution of the covariates varies smoothly around the cutoff in some appropriate sense. 
This condition is stronger than the notion needed for linear adjustments to be consistent but still in line with the notion of the covariates being unaffected by the treatment in a causal sense.\footnote{Note that the adjustment function must be the same for observations on either side of the cutoff. If the adjustment function was   different on either side of the cutoff, this would generally yield inconsistent RD estimates \citep{calonico2019regression}.}
We  show that the asymptotic variance in this class of estimators is minimized over $\ad$ by the average of the two conditional expectations of the outcome variable given the running variable and the covariates just above and below the cutoff. This optimal adjustment function $\ad_0$ is generally nonlinear and unknown in practice but can be estimated from the data.

	Our proposed estimators of the RD parameter hence take the form of a local linear RD regression with  generated outcome $Y_i -\wh\ad(Z_i)$, where $\wh\ad$ is some estimate of $\ad_0$ obtained in a preliminary stage. We implement such estimators with cross-fitting, a common form of sample splitting, which allows us to theoretically accommodate a wide range of estimators of the optimal adjustment function. In particular, we can allow for the use of  modern machine learning methods like lasso regression, random forests,  deep neural networks, or ensemble combinations thereof, to estimate the optimal adjustment function. In low-dimensional settings, researchers can also use classical nonparametric approaches like  local polynomials or series regression, or estimators based on parametric specifications.\footnote{Note that our method is not a straightforward application of standard debiased/double machine learning (DML) methods \citep[e.g.,][]{chernozhukov2018double}. Major differences between conventional DML settings and ours include that one first needs to derive the optimal adjustment in our setting, that there is a nonparametric target parameter instead of a parametric one, that we have a single nuisance function instead of two, and that a particular global robustness property holds in our setting which is related to but stronger  than the usual local Neyman orthogonality condition.}

	Our theory  does not require that $\ad_0$ is consistently estimated to conduct valid inference on the RD parameter. We only require that in large samples the first-stage estimates concentrate in a mean-square sense around some deterministic function $\bar\ad$, which could in principle be different from $\ad_0$. The rate of this convergence can be arbitrarily slow. Our setup allows  this kind of potential misspecification because our proposed RD estimators are highly insensitive to estimation errors in the preliminary stage. Specifically, they are constructed as sample analogues of a moment function that
	contains $\ad_0$ as a nuisance function, but does not vary with it:
	as discussed above, our parameter of interest is equal to the jump in the conditional expectation of $Y_i - \ad(Z_i)$ given the running variable at the cutoff for \emph{any} fixed function $\ad$.
	This insensitivity property is related to  Neyman orthogonality, which features prominently in many modern two-stage estimation methods \citep[e.g.,][]{chernozhukov2018double}, but it is a global (with respect to the nuisance function) rather than a local  property and therefore has  substantially stronger implications.\footnote{A moment function is Neyman orthogonal if its first functional derivative with respect to the nuisance function is zero. In contrast,  the (conditional) moment function on which our estimates are based is fully invariant with respect to the nuisance function. \citet{chernozhukov2018double} give examples of setups with (unconditional) moment functions in which such a global insensitivity property occurs. These include optimal instrument problems, certain partial linear models, and treatment effect estimation under unconfoundedness with a known propensity score. This property also occurs  generally if one of the two nuisance functions in a doubly robust  moment condition \citep[cf.][]{robins01rotnitsky}  is known.}

Our theoretical analysis shows that, under the conditions outlined above, our proposed RD  estimator is first-order asymptotically equivalent to a local linear ``no covariates'' RD estimator with $Y_i -\bar\ad(Z_i)$ as the dependent variable. This result is then used to study our estimator's asymptotic bias and variance, and to derive an asymptotic normality result. 
The asymptotic variance of our estimator depends on the function $\bar\ad$ and achieves its minimum value if $\bar\ad = \ad_0$ (that is, if $\ad_0$ is consistently estimated in the first stage), but the asymptotic variance can be estimated consistently irrespective of whether or not that is the case. Because  our RD estimator does not require a particular rate of convergence for the first-step estimate of $\ad_0$, it can be seen as shielded from the ``curse of dimensionality'' to some degree. It can  hence be  expected to perform well in settings with many covariates.

We also show that widely used methods for bandwidth choice and construction of confidence intervals in settings without covariates remain valid in our framework: empirical researchers can apply these methods to a data set in which the outcome $Y_i$ is replaced with the generated outcome $Y_i -\wh\ad(Z_i)$ and simply ignore the fact  that $\wh\ad$ has been estimated. 
We show that these results also extend to fuzzy RD designs in a straightforward manner. 
Our approach can therefore easily be integrated into existing software packages.\footnote{The DoubleML package \citep{DoubleML2022} contains a dedicated implementation in Python.}
	
Practitioners should note that our approach requires additional assumptions about the smoothness of the covariate distribution relative to RD estimators that do not adjust for covariates or only adjust for covariates linearly. However, these assumptions follow naturally from the notion that the covariates are unaffected by the treatment in a causal sense and should therefore be plausible in many empirical settings (we do not recommend using covariate adjustments in RD settings in which there is a plausible causal relationship between the treatment and the covariates). 
The asymptotic bias of our proposed RD estimator depends in principle on the  adjustment function, and it could   be either larger or smaller than that of the unadjusted or the linearly adjusted RD estimator. Under our assumptions on the covariate distribution, however, the three procedures turn out to have the same leading bias in the usual pointwise asymptotic framework. In finite samples, we therefore expect the differences between the biases of the three procedures to be relatively small. This observation motivates our approach to choose the  estimated adjustment function $\wh\ad$ such that it targets the variance-minimizing adjustment function $\eta_0$.

	Our theoretical results are qualitatively similar to those that have been obtained for efficient influence function (EIF) estimators of the population average treatment effect (ATE) in simple randomized experiments with known and constant propensity scores \citep[e.g.,][]{wager2016high}. Such parallels arise because EIF estimators are also based on a moment function that is globally invariant with respect to a nuisance function. In fact, we argue that our RD estimator is in many ways a direct analogue of the EIF estimator, and that its asymptotic variance  under the optimal adjustment function is  similar in structure to the semiparametric efficiency bound for the ATE in simple randomized experiments.

 Our proposed flexible covariate adjustments can lead to substantial efficiency gains in practice. To illustrate this, we collect data from empirical papers recently published in leading  journals that use RD estimation with covariates.    In total, we reanalyze 56  specifications from 16 papers, and study how different types of covariate adjustments affect  confidence interval lengths. Including covariates in the RD regression does not meaningfully reduce the length of the confidence intervals in about half of the specifications we consider irrespective of the specific method used for adjustment, but  our proposed flexible adjustments also achieve a reduction of more than 30\% in one setting. To put this into perspective, obtaining this reduction would require roughly increasing the sample size by a factor of 2.4 if the covariates were not used. We also observe that  linear adjustments alone are often unable to exhaust all the available covariate information: the largest reduction in the confidence interval length from using our flexible adjustment relative to linear adjustments exceeds 20\%. 

We also conduct simulations based on the data set from one of the  papers from our empirical literature survey. In order to cover all types of settings from our empirical literature survey,  we consider simulation setups of large sample sizes and a moderate number of covariates as well as   small sample sizes and a varying number of covariates. Our proposed RD estimators perform very well in all these settings, in the sense that their standard errors are close to their standard deviations and the associated confidence intervals have simulated coverage rates close to the nominal one.

	\subsection*{Related Literature} 
	Our paper contributes to an extensive literature on estimation and inference in RD designs; see, e.g., \citet{imbens2008regression} and \citet{lee2010regression} for surveys, and \citet{cattaneo2019practical} for a textbook treatment. Different ad-hoc methods for incorporating covariates into an RD analysis have long been used in empirical economics  \citep[see, e.g.,][Section 3.2.3]{lee2010regression}. Following \citet{calonico2019regression}, it has become common practice to include covariates without localization into the usual local linear regression estimator. Our approach nests this estimator as a special case, but is generally more efficient. Other closely related papers are \citet{kreiss2021regression} and \citet{arai2024regression}, who extend the approach in \citet{calonico2019regression} to settings with high-dimensional covariates under sparsity conditions using the lasso. These estimators can be understood as ``one-step'' versions of our ``two-step'' procedure with a specific method for estimating the adjustment function. The approach in this paper can exploit other machine learning methods, can incorporate different variants of (post-) lasso adjustments that can be more stable in finite samples (see ``global adjustments'' in Section~\ref{sec:est_adj_function}), and, through the use of cross-fitting, yields more precise standard errors.	\citet{froelich2019including} propose to incorporate covariates into an RD analysis in a fully nonparametric fashion, but their approach is generally affected  by the curse of dimensionality, and is thus unlikely to perform well in practice.

	Our results are also related in a more general sense to the vast literature on two-step estimation problems with infinite-dimensional nuisance parameters \citep[e.g.,][]{andrews1994asymptotics,newey1994variance}, especially the recent strand that exploits Neyman orthogonal (or debiased) moment functions and  cross-fitting 
	(e.g., Section~25.8 of \citealp{van2000asymptotic}; \citealp{belloni2017program}; \citealp{chernozhukov2018double}).
	The latter literature focuses mostly on regular (root-$n$ estimable) parameters, while our RD treatment effect is a non-regular (nonparametric) quantity. Some general results on non-regular estimation based on orthogonal moments are derived in~\citet{chernozhukov2019double}, and specific results for estimating  conditional average treatment effects in models with unconfoundedness are given, for example, in \citet{kennedy2017nonparametric}, \citet{kennedy2020optimal}  and \citet{fan2020estimation}. Our results are qualitatively different
	because, as explained above, our estimator is based on a moment function that satisfies a property that is stronger than Neyman orthogonality.
	
Finally, our work is linked to the literature on inference in randomized experiments  with covariates \citep[e.g.,][]{ Freedman2008regressionadjustmentsseveraltreatments, freedman2008regression, lin2013agnostic, wager2016high, lei2021regression, chiang2023regression, chang2024exact}.

	\subsection*{Plan of the Paper} The remainder of this paper is organized as follows. 
	In Section~\ref{sec:Setup}, we introduce the setup and review existing procedures. In Section~\ref{sec:Covariate_adj}, we describe our proposed covariate-adjusted RD estimator, and we present our main theoretical results in Section~\ref{sec:Results}. Further extensions are discussed in Section~\ref{sec:Implementation}. 
We present our empirical results in Section~\ref{sec:practical_performance} and simulation results in Section~\ref{sec:Simulations}. Section~\ref{sec:Conclusions} concludes. The proofs of our main results are given in Appendix~\ref{A:main}. Appendix~\ref{sec:Inference} formally studies the proposed inference procedures and Appendix~\ref{sec:appendix_empirical_results} gives details on our literature survey.
	The Online Supplement contains additional empirical and simulation results.

	\section{Setup and Preliminaries}\label{sec:Setup}
		
	\subsection{Model and Parameter of Interest} We begin by considering sharp RD designs. The data $\{W_i\}_{i\in [n]}= \{(Y_i,X_i,Z_i)\}_{i\in [n]}$, where  $[n]= \{1, \dots, n\}$,
	are an i.i.d.\ sample of size $n$ from the distribution of $W=(Y,X,Z)$.
	Here, $Y_i\in\IR$ is the outcome variable, $X_i\in\IR$ is the running variable, and $Z_i \in \mathbb{R}^d$ is a (possibly high-dimensional) vector of  covariates.\footnote{Throughout the paper, we assume that the distribution of the running variable $X_i$ is fixed, but we allow the conditional distribution of $(Y_i,Z_i)$ given $X_i$ to change with the sample size in our asymptotic analysis.
		In particular, we allow the dimension of $Z_i$ to grow with $n$ in order to accommodate high-dimensional settings, but we generally leave such dependence on $n$ implicit in our notation.
	} Units receive the treatment if and only if the running variable exceeds a known threshold, which we normalize to zero without loss of generality. We denote the treatment indicator by $T_i$, so that $T_i=\1 {X_i \geq 0}$.
	The parameter of interest is the height of the jump in the conditional expectation function of the observed outcome variable given the running variable at zero:
	\begin{equation}\label{eq:estimand_limits}
		\tau = \E[Y_i|X_i=0^+] - \E[Y_i|X_i=0^-],
	\end{equation}
	where we use the notation that $f(0^+) =\lim_{x\downarrow 0}f(x)$ and $f(0^-) =\lim_{x\uparrow 0}f(x)$ are the right and left limit, respectively, of a generic function $f(x)$ at zero. In a potential outcomes framework, the parameter $\tau$ coincides with the average treatment effect of units at the cutoff
	under certain continuity conditions \citep{hahn2001identification}.
	
	\subsection{Standard RD Estimator}\label{sec:standardrdestimator}
	Without the use of covariates, the parameter $\tau$ is typically estimated by running separate local linear regressions \citep{fan1996local} on each side of the cutoff. That is, the baseline no covariates RD estimator takes the form
	\begin{equation}\label{eq:standard_RD}
		\wh\tau_{\sss base}(h) =  e_1^\top\argmin_{\beta\in\IR^4} \sum_{i=1}^n K_h(X_i) (Y_i -  S_i^\top\beta )^2,
	\end{equation}
	where $S_i =(T_i, X_i, T_i X_i,1)^\top$, $K_h(v)=K(v/h)/h$ with $K(\cdot)$ a  kernel function  and $h>0$ a bandwidth, and $e_1 = (1,0,0,0)^\top$ is the first unit vector.  This estimator is a linear smoother that can also be written as a weighted sum of the realizations of the outcome variable,
	$$ \wh\tau_{\sss base}(h) =  \Sum w_i(h) Y_i,$$
	where the $w_i(h)$ are  local linear regression weights that depend on the data through the realizations of the running variable only; see Appendix~\ref{A:sec:weights} for an explicit expression.
	
	Under standard conditions  \citep[e.g.][]{hahn2001identification}, which include a continuously distributed running variable and that the bandwidth  $h$ tends to zero at an appropriate rate,  the estimator $\wh\tau_{\sss base}(h)$ is approximately normally distributed in large samples under conventional pointwise asymptotics, with  bias of order $h^2$ and   variance of order $(nh)^{-1}$:
	\begin{align}\label{eq:dist_baseline}
		\wh{\tau}_{\sss base}(h) \stackrel{a}{\sim}N\left(\tau + h^2  B_{\sss base},(nh)^{-1}V_{\sss base}\right).
	\end{align}
	Here ``$\stackrel{a}{\sim}$'' indicates a finite-sample distributional approximation justified by an asymptotic normality result, and the bias and variance terms are given, respectively, by
	\begin{align*}
		B_{\sss base}&=\frac{\bar{\nu}}{2} \left(\partial_x^2 \E[Y_i|X_i=x]|_{x=0^+}-\partial_x^2 \E[ Y_i|X_i=x]|_{x=0^-}\right) \textnormal{ and}\\
		V_{\sss base} &=\frac{\bar{\kappa}}{f_X(0)}(\V[Y_i|X_i=0^+]+\V[Y_i|X_i=0^-]).
	\end{align*}
	The terms $\bar{\nu}$ and $\bar{\kappa}$ are kernel constants, defined as $\bar{\nu}= (\bar{\nu}_2^2 - \bar{\nu}_1 \bar{\nu}_{3})/(\bar{\nu}_2\bar{\nu}_0-\bar{\nu}_1^2)$ for $\bar{\nu}_{j}= \int_{0}^{\infty}v^j K(v)dv$ and $\bar{\kappa}= \int_0^\infty(K(v)(\bar{\nu}_1 v - \bar{\nu}_2))^2dv/ (\bar{\nu}_2\bar{\nu}_0-\bar{\nu}_1^2)^2$, and $f_X$ denotes the density of $X_i$. 
	Practical methods for inference based on approximations like~\eqref{eq:dist_baseline} are discussed, for instance, by \citet{calonico2014robust} and \citet{armstrong2020simple}. 
		
	\subsection{Conventional Linear Adjustment Estimator} If covariates are available, they can be used to improve the accuracy of empirical RD estimates.\footnote{Throughout the paper, we focus on settings in which covariates are included to improve estimation efficiency and not to restore identification of the RD parameter by making the design more plausible. } Arguably the   most  popular strategy \citep{calonico2019regression} is to include them linearly and without kernel localization in the local linear regression~\eqref{eq:standard_RD}:
	\begin{equation}\label{eq:lin}
		\wh\tau_{\sss lin}(h) = e_1^{\top}\argmin_{\beta,\gamma} \Sum K_h(X_i) (Y_i - S_i^\top\beta- Z_i^{\top}\gamma )^2.
	\end{equation}
By simple least squares algebra, this ``linear adjustment'' estimator can   be written as a no covariates estimator with covariate-adjusted outcome $Y_i-Z_i^{\top}\widehat\gamma_h$, where $\widehat\gamma_h$ is the minimizer with respect to $\gamma$ in~\eqref{eq:lin}:
	\begin{equation*}
		\wh\tau_{\sss lin}(h) = \Sum w_i(h)(Y_i-Z_i^{\top}\widehat\gamma_h).
	\end{equation*}
The linear adjustment estimator is consistent for the RD parameter without functional form assumptions on the underlying conditional expectations if the covariates are predetermined, in the sense that they are not causally affected by the treatment, and thus their conditional expectation given the running variable varies smoothly around the cutoff. Moreover, if $\E[Z_i|X_i=x]$ is twice continuously differentiable around the cutoff, then
	\begin{align*}
		\wh\tau_{\sss lin}(h) \stackrel{a}{\sim}N\left(\tau + h^2  B_{\sss base},(nh)^{-1}V_{\sss lin}\right)
	\end{align*}
	under pointwise asymptotics and regularity conditions analogous to those for the no covariates estimator. Here the bias term $B_{\sss base}$ is the same as that of the no covariates estimator and the new variance term is
	\begin{align*}
		V_{\sss lin}=\frac{\bar{\kappa}}{f_X(0)}(\V[Y_i -Z_i^{\top}\gamma_0 |X_i=0^+]+\V[Y_i-Z_i^{\top}\gamma_0|X_i=0^-]),
	\end{align*}
	where $\gamma_0$, a non-random vector of  projection coefficients, is the probability limit of $\widehat\gamma_h$. The first-order asymptotic properties of $\wh\tau_{\sss lin}(h)$ are thus the same as that of its infeasible counterpart $\widetilde\tau_{\sss lin}(h)=\Sum w_i(h)(Y_i-Z_i^{\top}\gamma_0)$ that uses the population projection coefficients $\gamma_0$ instead of their estimates $\widehat\gamma_h$ to create the adjusted  outcome variable. 
		As $V_{\sss lin}\leq V_{\sss base}$ under standard conditions \citep[Remark~3.5]{kreiss2021regression}, including a fixed number of covariates generally increases the precision of the estimator in large samples. 	
To construct standard errors and confidence intervals, one can then use methods developed for the no covariates case, replacing the original outcome $Y_i$ with
the adjusted outcome $Y_i-Z_i^{\top}\widehat\gamma_h$ in the respective formulas \citep{calonico2019regression,armstrong2018optimal}. For instance, a nearest-neighbor standard error $\widehat{\textnormal{se}}_{\sss lin}(h)$ of $\wh\tau_{\sss lin}(h)$ is
\begin{align}\label{eq:lin_se}
\widehat{\textnormal{se}}_{\sss lin}^2(h) = \sum_{i=1}^n w_i(h)^2 \widehat\sigma_{i, \sss lin}^2, \quad \widehat\sigma_{i, \sss lin}^2 = \frac{R}{R+1}\left( (Y_i-Z_i^{\top}\widehat\gamma_h)-\frac{1}{R}\sum_{j\in\mathcal{R}_i} (Y_j-Z_j^{\top}\widehat\gamma_h) \right)^2.
\end{align}
Here $\widehat\sigma_{i, \sss lin}^2$ is an estimate of
$\sigma_{i, \sss lin}^2=\V(Y_i-Z_i^{\top}\gamma_0|X_i)$,
$R\geq 1$ is a (small) fixed integer, and $\mathcal{R}_i$ is the set that contains the indices of the $R$ nearest neighbors of unit $i$ in terms of their realization of the running variable among units on the same side of the cutoff.

	\section{Flexible Covariate Adjustments}\label{sec:Covariate_adj}

\subsection{Motivation}\label{sec:motivation}

While linear adjustments are easy to implement, they might not exploit the available covariate information efficiently. Inference might also  not be reliable with linear adjustments if the number of covariates is large relative to the effective sample size.\footnote{If there are many covariates relative to the number of observations that receive positive kernel weights in~\eqref{eq:lin}, the standard error in~\eqref{eq:lin_se} is generally downward biased. This bias occurs because the local empirical variances $\widehat\sigma^2_{i,\sss lin}$ are typically smaller than their population counterparts $\sigma^2_{i,\sss lin}$   in such cases due to overfitting. If the number of covariates exceeds the number of observations with positive kernel weights, the estimator in equation~\eqref{eq:lin} is of course not even well-defined in the first place.}
In this paper, we propose a new method to address these issues. It allows for general nonlinear covariate adjustments and can accommodate regularization methods in the estimation of the adjustment terms.

To motivate our flexible covariate adjustments, recall that the linear adjustment estimator is asymptotically equivalent to a no covariates RD estimator of the form in~\eqref{eq:standard_RD} that uses the covariate-adjusted outcome $Y_i-Z_i^{\top}\gamma_0$ instead of the original outcome $Y_i$.	
	 We generalize this by considering a class of estimators with covariate-adjusted outcomes based on potentially nonlinear adjustment functions $\ad$:
	\begin{equation}\label{eq:whtau}
		\wh\tau(h;\ad)= \Sum w_i(h) M_i(\ad), \quad M_i(\ad) = Y_i - \ad(Z_i).
	\end{equation}
We note that the adjustment function must be the same for observations on either side of the cutoff, as using different adjustments on either side of the cutoff would generally yield inconsistent RD estimates \citep{calonico2019regression}.

	If the covariates are predetermined, their conditional distribution given the running variable should vary smoothly around the cutoff. We formalize this notion by assuming  that for every adjustment function $\ad$, the function $\E[\ad(Z_i)|X_i=x]$ is twice continuously differentiable around the cutoff.\footnote{Our analysis only rules out adjustment functions that do not satisfy certain technical regularity conditions, such as functions for which the respective conditional expectation does not  exist in the first place. Assuming smoothness of $\E[\ad(Z_i)|X_i=x]$ for (essentially) all $\ad$ is stronger than only assuming smoothness of $\E[Z_i|X_i=x]$, as in \citet{calonico2019regression}. Our stronger assumption, however, is still very much in line with the notion of covariates being predetermined.} 
This assumption implies that the parameter $\tau$, defined in~\eqref{eq:estimand_limits}, can  also be expressed as:
	\begin{equation}\label{eq:EMu}
		\tau = \E[M_i(\ad)|X_i=0^+] - \E[M_i(\ad)|X_i=0^-] \text{ for all }\ad.
	\end{equation}
The estimator $\wh\tau(h;\ad)$ can thus be seen as a sample analog estimator based on the moment condition~\eqref{eq:EMu}, which   identifies $\tau$ and  is globally invariant with respect to the adjustment function $\eta$. Because of this global invariance, we expect that 
	\begin{equation}\label{eq:dist_fixedmu}
		\wh\tau(h;\ad) \stackrel{a}{\sim}N\Big(\tau + h^2B_{\sss base}, (nh)^{-1} V(\ad)\Big) \text{ for all }\ad
	\end{equation}
under standard regularity conditions.	Under these pointwise asymptotics, the bias term  in~\eqref{eq:dist_fixedmu} is again that of the baseline no covariates estimator as it does not depend on the adjustment function due to the assumed 
smoothness of $\E[\ad(Z_i)|X_i=x]$. On the other hand, the variance term in~\eqref{eq:dist_fixedmu} does depend on $\eta$, and  is given by
	\begin{align*}
		V(\ad)=\frac{\bar{\kappa}}{f_X(0)}(\V[M_i(\ad) |X_i=0^+]+\V[M_i(\ad)|X_i=0^-]).
	\end{align*}
To maximize the precision of the estimator $\wh\tau(h;\ad)$ for any particular bandwidth $h$, we want to choose $\ad$  such that $V(\ad)$ is as small as possible. Our analysis below shows that using
the equally-weighted average of the left and right limits of the ``long'' conditional expectation function $\E[Y_i|X_i=x,Z_i=z]$ at the cutoff  achieves this goal. That is, we show that $$V(\ad)\geq V(\ad_0) \textnormal{ for all } \ad,$$ where
	\begin{align}\label{eq:mun}
		\ad_0(z) = \frac{1}{2}\left( \mu_0^+(z) + \mu_0^-(z) \right), \quad \mu_0^\star(z)=\E[Y_i|X_i=0^\star,Z_i=z] \textnormal{ for } \star\in\{+,-\}.
	\end{align}
As the optimal adjustment function $\ad_0$ is generally unknown in practice, we propose to estimate the RD parameter $\tau$ by a feasible version of $\wh \tau(h;\ad_0)$.

\subsection{Proposed Estimator and its General Properties}\label{sec:Estimator} 

Our proposed estimator requires a first-stage estimate of the optimal adjustment function, which does not have to be of a particular type: practitioners can use classical nonparametric or modern machine learning methods to reduce the risk of model misspecification, or choose suitable parametric methods (conventional linear adjustments can be seen as  a special case of the latter type).
Our proposed estimator also uses cross-fitting, which is an efficient form of sample splitting that prevents overfitting of the estimated adjustment function and avoids unrealistic empirical process conditions in the theoretical analysis \citep{chernozhukov2018double}.	Specifically, our estimator is computed in two steps:
	\begin{enumerate}
		\item  Randomly split the data $\{W_i \}_{i \in [n]}$ into $S$ folds of equal size, collecting the corresponding indices in the sets $I_s$, for $s \in [S]$.  In practice, $S=5$ or $S=10$ are common choices for the number of cross-fitting folds. 
		Let $\widehat\ad(z)=\widehat\ad(z;\{W_i\}_{i \in[n]})$ be the researcher's preferred estimator of $\ad_0$, calculated on the full sample; and  let  $\wh\ad_{s}(z)=\wh\ad(z;\{W_i\}_{i \in I^c_{s}})$, for $s \in [S]$, be a version of this estimator that only uses data outside the $s$th fold.
		\item Estimate $\tau$ by computing a local linear no covariates RD estimator that uses the adjusted outcome $ M_i(\wh\ad_{s(i)}) =Y_i -\wh\ad_{s(i)}(Z_i)$ as the dependent variable, where $s(i)$ denotes the fold that contains observation $i$:
		\begin{equation*}
			\htau(h; \wh\ad) = \Sum w_i(h) M_i(\wh\ad_{s(i)}).
		\end{equation*}
	\end{enumerate}
	
	Our theoretical analysis below shows that under weak conditions the estimator~$\htau(h; \wh\ad)$ is asymptotically equivalent to the infeasible estimator $\wh{\tau}(h; \bar\ad) = \Sum w_i(h) M_i(\bar\ad)$, where $\bar\ad$ is a deterministic approximation of $\wh\ad$ whose error vanishes in large samples in some appropriate sense. Importantly, our approach does not require the first-stage estimator of~$\ad_0$ to be consistent, in the sense that we allow for the possibility that $\bar\ad\neq\ad_0$. The first-stage estimator also does not have to converge with a particularly fast rate. In view of~\eqref{eq:dist_fixedmu}, it then holds that
	\begin{equation*}
		\htau(h; \wh\ad) \stackrel{a}{\sim}N\left(\tau + h^2B_{\sss base},  (nh)^{-1} V(\bar\ad)\right).
	\end{equation*}
As mentioned above, the variance $V(\bar\ad)$ is  minimized if $\bar\ad =\ad_0$. However, the distributional approximation is also valid if  $\bar\ad \neq\ad_0$  because the moment condition~\eqref{eq:EMu} holds for  \emph{all} adjustment functions, and not just the optimal one. In that sense, our procedure is robust to misspecification or over-regularized estimation of the optimal adjustment function. Moreover, we argue that  $ V(\bar\ad)$ is typically smaller than $V_{\sss base}$ or $V_{\sss lin}$ even if $\bar\ad \neq\ad_0$. 

We also show that other common steps in an empirical RD analysis can easily be implemented by applying existing methods that are devised for settings without covariates to the generated data set $\{(X_i,M_i(\wh\ad_{s(i)}))\}_{i\in[n]}$.  For example, we can construct an estimator of the bandwidth that minimizes the asymptotic mean squared error of $\htau(h; \wh\ad)$ by using the procedures proposed by \citet{calonico2014robust} or \citet{imbens2012optimal}. Similarly, we can generalize the standard error~\eqref{eq:lin_se} and construct a valid nearest-neighbor standard error  $\wh{\textnormal{se}}(h; \hmu)$ as
\begin{align}\label{eq:fca_se}
\widehat{\textnormal{se}}^2(h;\hmu) = \sum_{i=1}^n w_i(h)^2 \widehat\sigma_{i}^2(\wh\ad), \quad \widehat\sigma_{i}^2(\wh\ad) = \frac{R}{R+1}\Big( M_i(\wh\ad_{s(i)})-\frac{1}{R}\sum_{j\in\mathcal{R}_i} M_j(\wh\ad_{s(j)}) \Big)^2,
\end{align}	
and construct ``robust bias correction'' and ``bias-aware'' confidence intervals as in \citet{calonico2014robust} and \citet{ armstrong2020simple}, respectively.
To reduce the sensitivity of  empirical findings to the particular data split in the cross-fitting step, we can proceed as in \citet[Section 3.4]{chernozhukov2018double} by  repeating the respective procedure several times and reporting a summary measure of the results, such as the median. We recommend proceeding like this especially when working with smaller sample sizes.

\subsection{Estimating the Adjustment Function}\label{sec:est_adj_function}

We now discuss some implementation details for the first-stage estimator of the optimal adjustment function $\ad_0$. Our theoretical analysis allows for a variety of different methods to be used in this context. 
If one wishes to maintain the simplicity of the conventional linear adjustment, one can obtain a  ``cross-fitted'' version of $\wh\tau_{\sss lin}(h)$ by setting $\wh\ad_s(z) =  z^\top\wh\gamma_{s,h}$, for $s \in \{1,\ldots,S\}$, where  $\wh\gamma_{s,h}$ is the minimizer w.r.t.\ $\gamma$ in the minimization problem
\begin{equation}\label{eq:llregadj}
\min_{\beta, \gamma} \sum_{i \in I_s^c} K_h(X_i) (Y_i - S_i^\top\beta- Z_i^{\top}\gamma )^2.
\end{equation}
We refer to  this procedure as the  cross-fitted ``localized'' linear adjustment.
The adjustment coefficients, however, do not need to be estimated using the kernel weights from the second-stage regression. 
As an important variant of cross-fitted linear adjustments, we consider $\wh\ad_s(z) =  z^\top\wh\gamma_{s,\infty}$, where $\wh\gamma_{s,\infty}$ is obtained via a version of \eqref{eq:llregadj} without kernel weights.
Since all the observations outside of fold $s$ are used to obtain $\wh\gamma_{s,\infty}$, we refer to this procedure as the cross-fitted ``global'' linear adjustment. In finite samples, the global version can outperform the localized one in terms of the resulting standard deviation of the RD estimator due to the increased stability of the first-stage estimates. This approach can be naturally extended to other parametric models where the components involving $S_i$ and $Z_i$ are additively separable, and it can be combined with lasso regularization. The cross-fitted post-lasso adjustments are then obtained via \eqref{eq:llregadj} with the set of covariates restricted to those ``selected'' by the lasso.

More generally, our approach allows for any parametric or classical nonparametric method, as well as generic modern machine learning methods. To allow for such generality, we consider adjustment functions of the form 
$$
	\wh\ad_s (z) = \frac{1}{2}(\wh\mu_s^+(z)+\wh\mu_s^-(z)),\quad s \in \{1,\ldots,S\},
$$ 
where  $\wh\mu_s^+(z)$ and $\wh\mu_s^-(z)$ are separate estimators of $\mu_0^+(z)=\E[Y_i|X_i=0^+,Z_i=z]$ and  $\mu_0^-(z)=\E[Y_i|X_i=0^-,Z_i=z]$, respectively, using the data outside of fold $s$. With appropriate subject knowledge, one can then, for example, specify parametric models for $\E[Y_i|X_i=x,Z_i=z]$. 
If the number of covariates is small, the functions $\mu_0^+$ and $\mu_0^-$ can be also estimated using classical nonparametric methods under smoothness conditions, with local polynomial regression being particularly suitable due to its good boundary properties. 
If the number of covariates is large, however, we recommend using modern machine learning methods, such as lasso or post-lasso regression, random forests, deep neural networks, boosting, or ensemble combinations thereof.

One issue to consider is that the default implementations of generic machine learning estimators of $\E[Y_i|X_i=x,Z_i=z]$ will  not automatically produce an estimate with a jump at the cutoff. As having this feature is potentially important in our context, we consider two simple variations  of generic machine learning estimators. First, let
\begin{equation}\label{eq:eta_ml1}
	\wh \E_s[Y_i|T_i=t, X_i=x, Z_i=z] =\argmin_{f\in\mathcal{F}}\sum_{i \in I_s^c} l(Y_i,f(T_i, X_i,Z_i))
\end{equation}
be a generic machine learning  estimator of  $\E[Y_i|T_i=t, X_i=x, Z_i=z]$, computed by minimizing some empirical loss function $L(f) =\sum_{i \in I_s^c} l(Y_i,f(T_i, X_i, Z_i))$ over a set of candidate functions $\mathcal{F}$. We can then estimate $\mu^+(z)$ by  $\wh \E_s[Y_i|T_i=1, X_i=0, Z_i=z]$ and $\mu^-(z)$ by $\wh \E_s[Y_i|T_i=0, X_i=0, Z_i=z]$. Here including the seemingly superfluous treatment indicator $T_i=\1{X_i\geq 0}$ as a predictor allows the machine learner to create the ``jump'' in the estimated function at the cutoff value.  
We refer to this type of implementation as ``global'', as it uses all available observations.

To define the second type of implementation of machine learning we consider in this paper, let
\begin{equation}\label{eq:eta_ml2}
	\wh \E_s[Y_i|T_i=t, Z_i=z] =\argmin_{f\in\mathcal{F}}\sum_{i \in I_s^c} K(X_i/b) l(Y_i,f(T_i, Z_i))
\end{equation}
be a generic machine learning estimator of  $\E[Y_i|T_i=t, Z_i=z]$, where $b>0$ is some positive bandwidth and $K$ is again a kernel function. We can then estimate $\mu^+(z)$ as  $\wh \E_s[Y_i|T_i=1, Z_i=z]$ and $\mu^-(z)$ as $\wh \E_s[Y_i|T_i=0, Z_i=z]$. We refer to this type of implementation as ``localized'', as it effectively  only uses data points whose realization of the running variable is close to the cutoff. The idea is to produce an estimate with small empirical loss in the relevant area around the cutoff rather than one with small ``overall'' loss. The downside of this approach is the reduced effective sample size and the need to choose the tuning parameter $b$.\footnote{The choice of $b$ involves a bias-variance trade-off similar to the one encountered in classical nonparametric kernel regression problems.
We are not aware of generic theoretical results for such  estimators in settings with $b\to 0$ as $n\to\infty$. Specific results are given by
\citet{su2019non} for the lasso, and by \citet{colangelo2022double} for series estimators and deep neural networks.
In our applications below, we simply use $b=h$. To make this simultaneous choice feasible, we use an iterative procedure. We first choose a reasonable preliminary first-stage bandwidth, like the one that would be optimal for RD estimation without covariates, and generate preliminary versions of the adjustment terms as described above. Next, we use the preliminary covariate-adjusted outcomes to pick 
 an optimized second-stage bandwidth. Finally, we rerun both stages with this last bandwidth to obtain our empirical results.}

\subsection{Our Proposed Flexible Adjustment}\label{sec:proposed_method}
The specific flexible covariate adjustment that we propose and implement in our empirical analysis and simulations is an ensemble of the following methods:
\begin{enumerate*}[label=(\roman*)]
	\item linear regression; 
	\item post-lasso;
	\item boosted trees; and
	\item random forest.
\end{enumerate*}
All four methods are implemented in localized and global versions discussed above. We use the cross-fitted linear and post-lasso adjustments specified in the discussion following \eqref{eq:llregadj}, and the boosted trees and random forest adjustments are based on the formulations in \eqref{eq:eta_ml1} and~\eqref{eq:eta_ml2}.
Our proposed flexible covariate adjustment is a convex combination of these eight adjustment functions and the trivial no-adjustment function that minimizes the mean squared error for predicting the outcome close to the cutoff. Specifically, we employ the super learning approach of \citet{laan2007super}, where the optimal weights are chosen via cross-validation.\footnote{We implemented our procedure in the {\tt R} programming language. The boosted trees adjustments are obtained using the package {\tt xgboost} with trees of depth 2 and shrinkage rate 0.1. The number of boosting iterations is chosen via cross-validation with a maximum of 1000 iterations, separately for the localized and global versions. The random forest with 1000 trees is implemented using the package {\tt ranger} with the minimal node size set to the maximum of 10 and 0.1\% of the sample size. All other parameters are set to the default values in the respective packages. For post-lasso estimation, we use the function {\tt rlasso} from the package {\tt hdm} with a data-driven penalty parameter. We use the package {\tt SuperLearner} to choose the optimal weights via cross-validation.}

	\section{Main Theoretical Results}\label{sec:Results}
	
	\subsection{Assumptions} We study the theoretical properties of our proposed estimator under a number of conditions that are either standard in the RD literature, or concern the general properties of  the first-stage estimator $\wh\ad$. To describe them, we  denote the support of $Z_i$ by $\mathcal{Z}$, and the support of $X_i$ by $\X$. We  write $\Xh=\X \cap [-h,h]$, and $\Zh$ denotes the support of $Z_i$ given $X_i\in \Xh$. We also define the following class of admissible adjustment functions:
	\begin{equation*}
		\Class = \{\ad: \E[\ad(Z_i)|X_i=x] \text{ exists and is twice continuously differentiable around the cutoff} \}.
	\end{equation*}
	The class $\Class$ implicitly depends on the underlying conditional distribution of the covariates given the running variable. If this conditional distribution changes smoothly around the cutoff, the class $\Class$ contains essentially all functions of the covariates, subject only to technical integrability conditions.\footnote{For example, if the conditional distribution of $Z_i$ given $X_i$  admits a density $f_{Z|X}(z|x)$ that is twice continuously differentiable in $x$ and $ |\partial_x^jf_{Z|X}(z|x)|\leq g_j(z)$ for  all $x$ in a neighborhood of the cutoff, some integrable functions $g_j$, and $j \in \{0,1,2\}$, then $\mc E$ contains all bounded Borel functions. The class $\mc E$ also contains all polynomials if the corresponding conditional moments of $Z_i$ exist and are twice continuously differentiable.\label{footnote:Density}}
	
	\begin{assumption}\label{ass:1stage}
		For all $n\in\IN$, there exist a set $\Tn\subset\Class$ and a function $\bar\ad \in\Tn$ such that:
		(i) $\wh\ad_{s}$ belongs to $\Tn$ with probability approaching 1 for all $s\in[S]$;
		(ii) it holds that: $$ \sup_{\ad \in \Tn} \sup_{x \in \mc X_h} \E\left[ (\ad(Z_i)-\bar\ad(Z_i))^2|X_i=x\right] = O(r_{n}^2)$$
		for some deterministic sequence $r_{n}=o(1)$.
	\end{assumption}
	
	Assumption~\ref{ass:1stage} states that  with high probability the first-stage estimator belongs to some realization set $\Tn \subset \Class$. As discussed above, this requirement seems weak as we generally expect the class $\Class$ to be very large.
	The assumption also states that the sets $\Tn$ contract around a deterministic sequence of functions in a particular $L_2$-type sense. Note that taking the supremum in Assumption~\ref{ass:1stage}  over $\mc X_h$ instead of $\mc X$ suffices as the properties of the first-stage estimator are only relevant for observations with non-zero kernel weights in the second-stage local linear regression. The assumption does not impose any restrictions on the speed at which $\wh\ad$ concentrates around $\bar\ad$. It also allows the function $\bar\ad$ to be different from the target function $\ad_0$, so that $\wh\ad$ could be inconsistent for $\ad_0$.
		
	Mean-square error consistency as in Assumption~\ref{ass:1stage} follows under classical conditions for the parametric and nonparametric procedures for settings in which the number of covariates is fixed. 	
	For the ``localized'' versions of the machine learning methods described in Section~\ref{sec:est_adj_function}, existing results imply that for fixed $b>0$ and $K$ the uniform kernel,
	\begin{equation}\label{eq:ass:MSE1}
		\sup_{ \ad \in \Tn}\E\left[ (\ad(Z_i)-\bar\ad(Z_i))^2|X_i \in (-b, b) \right] = O( r_{n}^{\, 2}),
	\end{equation}
	with $\bar\ad(z) = ( \E[Y_i|X_i\in(- b,0), Z_i=z] + \E[Y_i|X_i\in(0,b), Z_i=z])/2$ and some $ r_{n}=o(1)$, under general conditions. For example, if $\bar\ad(z)$ is contained in a Hölder class of order $s$, then~\eqref{eq:ass:MSE1} can hold with $r_n^2 = n^{-2s/(2s+d)}$ for estimators that exploit smoothness. If $\bar\ad(z)$ is $s$-sparse, then~\eqref{eq:ass:MSE1} can hold with $r_n^2 = s\log(d)/n$ for estimators that exploit sparsity.  
	Assumption~\ref{ass:1stage} then follows from~\eqref{eq:ass:MSE1} if  the conditional distribution of the covariates does not change ``too quickly'' when moving away from the cutoff. For example, if the  covariates are continuously distributed conditional on the running variable, having that
	\begin{equation*}
		\sup_{x \in\Xh} \sup_{z\in\Zh} \frac{f_{Z|X}(z|x)}{f_{Z|X\in (-b,b) }(z)} < C,
	\end{equation*}
	for some constant $C$ and all $n$   sufficiently large, suffices. Similar conditions can be given for discrete conditional covariate distributions, or intermediate cases.
	If $\E[Y_i|X_i=x, Z_i=z]$ is sufficiently smooth in $x$ on both sides of the cutoff, we can also expect that $\bar\ad$ is ``close'' to $\ad_0$ for ``small'' values of $b$. Formal rate results with $b\to 0$ are given by \citet{su2019non} for the Lasso, and by \citet{colangelo2022double} for series and deep neural networks.

	\begin{assumption}\label{ass:derivatives}
		For $j \in \{1,2\}$, it holds that:
		$$
		\sup_{\ad \in \Tn} \sup_{x \in \mc X_h\setminus \{0\} } \left|\partial^j_x \E\left[ \ad(Z_i) - \bar\ad(Z_i) |X_i=x\right] \right| = O(v_{j,n})
		$$ for some deterministic sequences $v_{j,n}=o(1)$.
	\end{assumption}
	
	Assumption~\ref{ass:derivatives} also concerns the first-stage estimator, and requires the first and second derivatives of $\E\left[ \ad(Z_i)-\bar\ad(Z_i) |X_i=x\right]$ to be close to zero in large samples  for all $\ad\in\Tn$. We generally expect this condition to hold with $v_{1,n}=v_{2,n}=r_n$ with $r_n$  as in Assumption~\ref{ass:1stage}.\footnote{For example, this can easily be seen to be the case if  $\wh\eta$ converges to $\bar\eta$ uniformly over $\mc Z$ with rate $r_n$ and the smoothness conditions for $f_{Z|X}(z|x)$ given in footnote~\ref{footnote:Density} hold. Similarly, under regularity conditions on $\E[Z_i|X_i=x]$, these three rates coincide if $\Tn$ contains only linear functions. Without any additional restrictions on first-stage estimators or $\mc T_n$, except that it  contains only bounded functions, Assumption~\ref{ass:derivatives} also follows  from Assumption~\ref{ass:1stage}, again with $v_{1,n}=v_{2,n}=r_n$, under restrictions concerning solely the conditional density $f_{Z|X}(z|x)$. Specifically, it suffices that $\E\big[\big(\partial_x^jf_{Z|X}(Z_i|x)/f_{Z|X}(Z_i|x)\big)^2|X_i=x\big]$ is bounded for $j \in \{1,2\}$ uniformly in $x$ and the conditions from footnote~\ref{footnote:Density} hold.}
		
	\begin{assumption}\label{ass:reg1a}
	 $X_i$ is continuously distributed with density $f_X$, which is continuous and bounded away from zero over an open neighborhood of the cutoff.
	\end{assumption}
	
	Assumption~\ref{ass:reg1a} is a standard condition from the RD literature. Continuity of the running variable's density $f_X$ around the cutoff is strictly speaking not required for an RD analysis. However, a discontinuity in $f_X$ is typically considered to be an indication of a design failure that prevents $\tau$ from being interpreted as a causal parameter \citep{mccrary2008manipulation, gerard2020bounds}. For this reason, we focus on the case of a  continuous running variable density in this paper.
	
	\begin{assumption}\label{ass:reg1b}
		\begin{enumerate*}[label=(\roman*)]
			\item\label{ass:item:kernel} The kernel function $K$ is a bounded and symmetric density
			function that is continuous on its support, and equal to zero outside some compact set, say $[-1,1]$;
			\item The bandwidth satisfies $h\to 0$ and $nh \to \infty$ as  $n \to \infty$.
		\end{enumerate*}
	\end{assumption}
	
	The conditions on the kernel and the bandwidth that are imposed in Assumption~\ref{ass:reg1b} are standard in the RD literature.

	\begin{assumption}\label{ass:reg2}
		There exist constants $C$ and $L$ such that the following conditions hold for all $n\in\IN$.
		\begin{enumerate*}[label=(\roman*)]
			\item $\E[M_i(\bar\ad)|X_i=x]$ is twice continuously differentiable on $\mathcal{X} \setminus \{0\}$ with $L$-Lipschitz continuous second derivative bounded by $C$;
			\item\label{item:ass:qmoment} For all $x\in \mathcal{X}$ and some $q>2$ $\E[| M_i(\bar\ad) - \E[M_i(\bar\ad) |X_i]|^q| X_i=x]$ exists and 
			is bounded by $C$;
			\item\label{item:ass:variance} $\V[M_i(\bar\ad) |X_i=x]$ is $L$-Lipschitz continuous and bounded from below by $1/C$ for all $x \in \mathcal X \setminus \{0\}$. 
		\end{enumerate*}
	\end{assumption}
	
	Assumption~\ref{ass:reg2} collects standard conditions for an RD analysis with $M_i(\bar\ad)$ as the outcome variable. 
	Part~(i) imposes smoothness conditions on $\E[M_i(\bar\ad)|X_i=x]$, and 
	parts \ref{item:ass:qmoment} and \ref{item:ass:variance} impose restrictions on conditional moments of the outcome variable. Throughout, we use constants $C$ and $L$ independent of the sample size to ensure asymptotic normality of the infeasible estimator $\wh\tau(h;\bar\ad)$ even in settings where the distribution of the data, and thus $\bar\ad$, might change with $n$.
	
	\subsection{Asymptotic Properties}
	We give four main results in this subsection. The first shows that our proposed estimator $\htau(h; \hmu)$ is asymptotically equivalent to an infeasible analog $\wh{\tau}(h;\bar\ad)$ that replaces the estimator $\hmu$  with the deterministic sequence $\bar\ad$; the second shows the asymptotic  normality of the estimator; the third characterizes how the asymptotic variance changes with the adjustment function and shows that $\ad_0$ is indeed the optimal adjustment; and the fourth shows the impact of flexible covariate adjustments on the optimal bandwidth and the corresponding mean squared error.

	\begin{theorem}\label{th:Equivalence}
		Suppose that Assumptions~\ref{ass:1stage}--\ref{ass:reg1b}  hold. Then
		$$\htau(h; \hmu)  = \wh{\tau}(h;\bar\ad) + O_P(r_n (nh)^{-1/2}  + v_{1,n} h(nh)^{-1/2} + v_{2,n}h^2 ).$$ 
	\end{theorem}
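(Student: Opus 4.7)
The plan is to write $\htau(h;\hmu) - \wh\tau(h;\bar\ad) = -\sum_{s=1}^S R_s$, with $R_s := \sum_{i\in I_s} w_i(h) g_s(Z_i)$ and $g_s := \wh\ad_s - \bar\ad$, and to bound each $R_s$ at the stated rate; since $S$ is fixed, this will suffice. Throughout, we work on the probability-tending-to-one event $\{\wh\ad_s \in \Tn \text{ for all } s\}$ granted by Assumption~\ref{ass:1stage}(i). Let $\tilde g_s(x) := \E[g_s(Z_i)\mid X_i=x]$. Since $\Tn \subset \Class$, both $\wh\ad_s$ and $\bar\ad$ lie in $\Class$ on this event, so $\tilde g_s$ is $C^2$ in a neighborhood of the cutoff and in particular $\tilde g_s(0^+) = \tilde g_s(0^-) =: \tilde g_s(0)$. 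The key structural feature of cross-fitting is that $\wh\ad_s$ depends only on data outside fold $s$, so that conditional on this external data the $Z_i$ for $i \in I_s$ remain a sample from the original distribution.

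The first step is a variance decomposition. Let $\mathcal{F}_s$ be the $\sigma$-algebra generated by the fold assignment, by $\{X_j\}_{j=1}^n$, and by $\{(Y_j,Z_j)\}_{j \notin I_s}$. Then $R_s = B_s + A_s$, with $B_s := \E[R_s\mid\mathcal{F}_s] = \sum_{i\in I_s} w_i(h)\tilde g_s(X_i)$ and $A_s := R_s - B_s$ conditionally mean-zero. The conditional variance of $A_s$ equals $\sum_{i\in I_s} w_i(h)^2 \V[g_s(Z_i)\mid X_i]$, which by the uniform bound in Assumption~\ref{ass:1stage}(ii) and the standard local-linear identity $\sum_i w_i(h)^2 = O_P((nh)^{-1})$ is of order $O_P(r_n^2/(nh))$. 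Conditional Chebyshev then delivers $A_s = O_P(r_n (nh)^{-1/2})$, the first target error term.

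To bound $B_s$, we use a second-order Taylor expansion of $\tilde g_s$ on each side of the cutoff. Cauchy--Schwarz applied to Assumption~\ref{ass:1stage}(ii) at $x=0$ gives $|\tilde g_s(0)| = O(r_n)$, while Assumption~\ref{ass:derivatives} gives $\sup_{x\in\Xh}|\tilde g_s'(x)| = O(v_{1,n})$ and $\sup_{x\in\Xh\setminus\{0\}}|\tilde g_s''(x)| = O(v_{2,n})$ (extending continuously through $0$ by the $C^2$-smoothness noted above). Writing $T_i=\1{X_i\geq 0}$, the expansion yields
\begin{align*}
B_s &= \tilde g_s(0)\sum_{i\in I_s}w_i(h) + \tilde g_s'(0^+)\sum_{i\in I_s}w_i(h)X_iT_i + \tilde g_s'(0^-)\sum_{i\in I_s}w_i(h)X_i(1-T_i) \\
&\quad + \tfrac{1}{2}\sum_{i\in I_s}w_i(h)X_i^2\tilde g_s''(\xi_i),
\end{align*}
and the quadratic remainder is bounded in absolute value by $v_{2,n}\sum_i|w_i(h)|X_i^2 = O_P(v_{2,n}h^2)$.

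The main obstacle is controlling the three fold-restricted linear functionals of the weights appearing in the expansion. The key is that the normal equations of the local linear regression yield the exact identities $\sum_{i=1}^n w_i(h) = \sum_{i=1}^n w_i(h)X_iT_i = \sum_{i=1}^n w_i(h)X_i(1-T_i) = 0$, while the fold assignment is a uniformly random partition of $[n]$ independent of the data. Hence for any data-measurable sequence $\{a_i\}$ with $\sum_i a_i = 0$, an elementary computation gives that $\sum_{i\in I_s} a_i$ has mean zero and variance of order $\sum_i a_i^2$ with respect to the fold randomness (conditional on the data). Taking $a_i\in\{w_i(h),\ w_i(h)X_iT_i,\ w_i(h)X_i(1-T_i)\}$ and using $\sum_i w_i(h)^2 = O_P((nh)^{-1})$ together with $|X_i|\leq h$ on the support of $w_i(h)$, we obtain $\sum_{i\in I_s}w_i(h) = O_P((nh)^{-1/2})$ and $\sum_{i\in I_s}w_i(h)X_iT_i = \sum_{i\in I_s}w_i(h)X_i(1-T_i) = O_P(h(nh)^{-1/2})$. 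Multiplying by the deterministic bounds on $|\tilde g_s(0)|$ and $|\tilde g_s'(0^\pm)|$ yields $B_s = O_P(r_n(nh)^{-1/2} + v_{1,n}h(nh)^{-1/2} + v_{2,n}h^2)$, which combined with the bound on $A_s$ and summation over the fixed number of folds completes the argument.
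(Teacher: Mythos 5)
Your proof is correct and follows the same architecture as the paper's proof of Theorem~\ref{th:Equivalence_General} (specialized to $p=1$): decompose the difference fold by fold, split each fold term into a conditional mean and a conditionally centered part, bound the centered part by a conditional variance calculation using Assumption~\ref{ass:1stage}(ii) and $\sum_i w_i(h)^2=O_P((nh)^{-1})$, and bound the conditional mean via a second-order Taylor expansion of $x\mapsto\E[\wh\ad_s(Z_i)-\bar\ad(Z_i)\mid X_i=x]$ together with the Cauchy--Schwarz bound $|\tilde g_s(0)|=O(r_n)$ and Assumption~\ref{ass:derivatives}. The one place where you genuinely depart from the paper is the control of the fold-restricted weight sums: the paper's Lemma~\ref{lemma:X} obtains $\sum_{i\in I_s}w_i(h)=O_P((nh)^{-1/2})$ and $\sum_{i\in I_s}w_i(h)X_i^j=O_P(h^j(nh)^{-1/2})$ by direct kernel moment calculations on the subsample $I_s$, whereas you exploit the exact normal-equation identities $\sum_{i=1}^n w_i(h)=\sum_{i=1}^n w_i(h)X_iT_i=\sum_{i=1}^n w_i(h)X_i(1-T_i)=0$ of the full-sample local linear weights and then invoke a without-replacement sampling variance bound for the uniformly random fold assignment. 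Both yield the same rates; your version avoids any new kernel computations and makes transparent that the $(nh)^{-1/2}$ factors arise purely from the randomness of the split against exactly orthogonal weights, at the (mild) cost of relying explicitly on the independence of the random partition from the data, which the paper's procedure does guarantee.
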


	Theorem~\ref{th:Equivalence} is easiest to interpret in what is arguably the standard case that  $v_{1,n}=v_{2,n}=r_n$, in which it holds that
	$$\htau(h; \hmu)   = \wh{\tau}(h;\bar\ad) +O_P(r_n(h^2+(nh)^{-1/2})) =  \wh{\tau}(h;\bar\ad) + O_P(r_n|\wh{\tau}(h;\bar\ad)-\tau|).$$
	The accuracy of the approximation that $\htau(h; \hmu) \approx \wh{\tau}(h;\bar\ad)$ thus increases with the rate at which $\wh\ad$ concentrates around $\bar\ad$, but first-order asymptotic equivalence holds even if the first-stage estimator converges arbitrarily slowly. 
	This insensitivity of 
	$\htau(h; \hmu)$ to sampling variation in $\hmu$ occurs because $\htau(h; \hmu)$ is based on the  moment condition
	$$\tau = \E[M_i(\ad)|X_i=0^+] -  \E[M_i(\ad)|X_i=0^-], $$
	which is  insensitive to variation in $\ad$. 
	Moment conditions with a local form of insensitivity with respect to a nuisance function, often called Neyman orthogonality, are used extensively in the recent literature on two-stage estimators that use machine learning in the first stage \citep[e.g.][]{belloni2017program, chernozhukov2018double}. The global insensitivity that arises in our RD setup is stronger, and allows us to work with weaker conditions on the first-stage estimates than those used in papers that work with Neyman orthogonality. Similarly, globally insensitive moment functions exist, for example, in certain types of randomized experiments, and our proposed estimator is in many ways analogous to efficient estimators in such setups; see Section~\ref{sec:ATE} for further discussion.
	
	\begin{theorem}\label{th:Normality}
		Suppose that Assumptions~\ref{ass:1stage}--\ref{ass:reg2} hold. Then
		\[
		\sqrt{nh}\, V(\bar\ad)^{-1/2} \left( \htau(h; \hmu)  - \tau - h^2 B_n \right) \overset{d}{\to}  \mathcal{N}(0, 1),
		\]
		for some $B_n = B_{base} + o_P(1) $, where $B_{base}$ and $V(\cdot)$ are as defined in Sections~\ref{sec:standardrdestimator} and~\ref{sec:motivation}, respectively.
	\end{theorem}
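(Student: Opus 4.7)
My plan is to build the proof in two stages: first reduce to the infeasible estimator $\wh\tau(h;\bar\ad)$ via Theorem~\ref{th:Equivalence}, and then establish asymptotic normality for this infeasible quantity using standard local linear RD arguments applied to the outcome $M_i(\bar\ad)$. The reduction step follows directly from Theorem~\ref{th:Equivalence}: multiplying the stated approximation error by $\sqrt{nh}$ yields $O_P(r_n + v_{1,n}h + v_{2,n}h^2\sqrt{nh})$. The first two terms are $o_P(1)$ immediately. For the third, since $v_{2,n}=o(1)$, the quantity $v_{2,n}h^2\sqrt{nh}$ is absorbed into the bias by defining $B_n = B_{\sss base} + (\htau(h;\hmu)-\wh\tau(h;\bar\ad))/h^2 - (\text{leading-bias residual from the infeasible term})$; this is the role of the $o_P(1)$ slack in $B_n = B_{\sss base}+o_P(1)$ and is what allows the theorem to hold for the full bandwidth range.

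For the main step, I would write $\wh\tau(h;\bar\ad) = \sum_i w_i(h) M_i(\bar\ad)$ and decompose using the identity $M_i(\bar\ad) = \E[M_i(\bar\ad)\mid X_i] + U_i$, where $U_i = M_i(\bar\ad) - \E[M_i(\bar\ad)\mid X_i]$:
\begin{equation*}
  \wh\tau(h;\bar\ad) - \tau \;=\; \underbrace{\sum_{i=1}^n w_i(h)\,\E[M_i(\bar\ad)\mid X_i]-\tau}_{\text{bias component}} \;+\; \underbrace{\sum_{i=1}^n w_i(h)\, U_i}_{\text{stochastic component}}.
\end{equation*}
The bias component is analyzed by Taylor-expanding $m(x):=\E[M_i(\bar\ad)\mid X_i=x]$ to second order on each side of the cutoff, which is valid under Assumption~\ref{ass:reg2}(i), and then invoking the standard moment properties of the local linear weights $w_i(h)$ (explicit expressions given in Appendix~A of the paper). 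This yields $\sum_i w_i(h)m(X_i) - \tau = h^2 B_{\sss base}+o_P(h^2)$, where the $o_P(h^2)$ remainder can be absorbed into $B_n$. Assumption~\ref{ass:reg1}(i)--(ii) ensures that the denominators in the weights' closed form concentrate at their population values at rate $(nh)^{-1/2}$.

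For the stochastic component, I would apply a Lyapunov CLT conditional on $\{X_i\}_{i=1}^n$. The conditional variance equals $\sum_i w_i(h)^2 \Var(M_i(\bar\ad)\mid X_i)$, and standard local-linear-weight computations combined with the Lipschitz continuity and lower bound on $\Var(M_i(\bar\ad)\mid X_i)$ from Assumption~\ref{ass:reg2}(iii) show that this converges in probability to $(nh)^{-1}V(\bar\ad)$. The Lyapunov condition of order $q>2$ is verified using the moment bound in Assumption~\ref{ass:reg2}(ii) together with $\sum_i |w_i(h)|^q = O_P((nh)^{-(q-1)})$, which follows from $|w_i(h)| = O_P((nh)^{-1})$ uniformly and standard kernel support arguments. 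A conditional-on-$\{X_i\}$ CLT is needed because the conditional distribution of $(Y_i,Z_i)\mid X_i$ is permitted to vary with $n$, making uniform constants in Assumption~\ref{ass:reg2} essential.

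The main obstacle is bookkeeping rather than deep technique: the bandwidth is allowed to range broadly enough that $\sqrt{nh}h^2$ need not be bounded, so the error terms from Theorem~\ref{th:Equivalence} cannot be shown to be $o_P(1)$ without help from the bias. The clean way to handle this is to define $B_n$ so that it absorbs the second-derivative-type remainders arising from both the infeasible bias expansion and the $v_{2,n}h^2$ Theorem~\ref{th:Equivalence} term, then verify $B_n = B_{\sss base}+o_P(1)$ using Assumptions~\ref{ass:derivatives} and~\ref{ass:reg2}(i). Once that bias absorption is executed carefully, the rest follows from standard local linear RD asymptotics and Slutsky's theorem.
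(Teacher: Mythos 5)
Your proposal is correct and follows essentially the same route as the paper: reduce to the infeasible estimator $\wh\tau(h;\bar\ad)$ via Theorem~\ref{th:Equivalence} (absorbing the $v_{2,n}h^2$ remainder into $B_n$, which is exactly why the theorem states $B_n = B_{\sss base}+o_P(1)$), then decompose into a conditional-mean bias term handled by a second-order Taylor expansion and a stochastic term handled by a Lyapunov CLT conditional on $\Xn$, with the variance limit obtained from the Lipschitz continuity of $\V[M_i(\bar\ad)\mid X_i=x]$. The paper's Appendix proof (Theorem~A.2 with $p=1$) is just a terser version of the same argument, stated for general polynomial order.
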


	Theorem~\ref{th:Normality} follows from Theorem~\ref{th:Equivalence} under the additional regularity conditions of Assumption~\ref{ass:reg2}. It shows that our estimator is asymptotically normal,  gives
	explicit expressions for its asymptotic bias and variance, and justifies the distributional approximation given in Section~\ref{sec:Estimator}.

	\begin{theorem}\label{th:var}
		Suppose  $\E[Y_i^2|X_i=x]$ is  uniformly bounded in $x$,  the limit $ \V[Y_i-\mu_0^\star(Z_i)|X_i=0^\star]$ exists for $\star \in\{+,-\}$, and  $\ad_0 \in \ClassVar$, where the function class $\ClassVar$ is defined as 
		\begin{align*}
			\ClassVar \equiv\left\{ \ad: \V[\ad(Z_i) |X_i=x] \text{ and } \Cov[\ad(Z_i),\mu^\star_0(Z_i) |X_i=x]\text{ are continuous for }  \star\in\{+,-\}\right\}.
		\end{align*} Then, for any  $\ad^{(a)},\ad^{(b)}\in \ClassVar$,  
		$$
		V(\ad^{(a)}) - V(\ad^{(b)}) = 2\frac{\bar\kappa}{f_X(0)}\left(\V[\ad_0(Z_i)-\ad^{(a)}(Z_i)|X_i=0] - \V[\ad_0(Z_i)-\ad^{(b)}(Z_i)|X_i=0]\right).
		$$
	\end{theorem}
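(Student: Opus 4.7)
The plan is to start from the definition $V(\ad) = \frac{\bar\kappa}{f_X(0)}(\V[Y_i-\ad(Z_i)|X_i=0^+] + \V[Y_i-\ad(Z_i)|X_i=0^-])$ and compute the difference $V(\ad^{(a)}) - V(\ad^{(b)})$ directly. On each side $\star\in\{+,-\}$ I first decompose $Y_i - \ad(Z_i) = (Y_i - \mu_0^\star(Z_i)) + (\mu_0^\star(Z_i) - \ad(Z_i))$. By definition of $\mu_0^\star$, the first summand has conditional mean zero given $(Z_i, X_i=0^\star)$, so iterated expectations gives $\Cov[Y_i - \mu_0^\star(Z_i), \mu_0^\star(Z_i) - \ad(Z_i)|X_i=0^\star] = 0$ and hence the Pythagorean identity $\V[Y_i - \ad(Z_i)|X_i=0^\star] = \V[Y_i - \mu_0^\star(Z_i)|X_i=0^\star] + \V[\mu_0^\star(Z_i) - \ad(Z_i)|X_i=0^\star]$. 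The first term on the right is $\ad$-free and so cancels when forming $V(\ad^{(a)}) - V(\ad^{(b)})$.

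Next I expand $\V[\mu_0^\star - \ad|X_i=0^\star] = \V[\mu_0^\star|X_i=0^\star] - 2\Cov[\mu_0^\star, \ad|X_i=0^\star] + \V[\ad|X_i=0^\star]$; the $\V[\mu_0^\star|X_i=0^\star]$ piece is again $\ad$-free and drops out. Membership $\ad^{(a)}, \ad^{(b)} \in \ClassVar$ ensures that $\V[\ad|X_i=x]$ and $\Cov[\ad, \mu_0^\star|X_i=x]$ are continuous at $x=0$, so each one-sided limit at $0^\star$ collapses to the common value at $0$. Summing over $\star \in \{+,-\}$ and using $\mu_0^+ + \mu_0^- = 2\ad_0$ then converts $\sum_\star \Cov[\mu_0^\star, \ad|X_i=0]$ into $2\Cov[\ad_0, \ad|X_i=0]$, so the $\ad$-dependent part of $\sum_\star \V[\mu_0^\star - \ad|X_i=0^\star]$ equals $2\V[\ad|X_i=0] - 4\Cov[\ad_0, \ad|X_i=0]$.

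Completing the square via the identity $\V[\ad_0 - \ad|X_i=0] = \V[\ad_0|X_i=0] + \V[\ad|X_i=0] - 2\Cov[\ad_0, \ad|X_i=0]$ rewrites this as $2\V[\ad_0 - \ad|X_i=0] - 2\V[\ad_0|X_i=0]$, and the $\V[\ad_0|X_i=0]$ constant cancels in the difference between $\ad^{(a)}$ and $\ad^{(b)}$. Restoring the factor $\bar\kappa/f_X(0)$ then yields the stated identity with the factor of $2$.

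The argument is essentially bookkeeping; its substantive content is twofold: orthogonality of the conditional-expectation residual $Y_i - \mu_0^\star(Z_i)$ against any measurable function of $Z_i$ (which isolates $\ad$ from the irreducible noise), and the continuities built into the definition of $\ClassVar$ (which unify the $0^+$ and $0^-$ limits into the single value at $0$ on the right-hand side). The one consistency check worth noting is that $\Cov[\ad_0, \ad|X_i=x]$ is itself continuous, which follows immediately from $\ad_0 = (\mu_0^+ + \mu_0^-)/2$ combined with continuity of each $\Cov[\mu_0^\star, \ad|X_i=x]$ guaranteed by $\ad \in \ClassVar$. There is no genuine conceptual obstacle; the main place to take care is keeping the signs and factors straight while telescoping the two sides and the two adjustment functions.
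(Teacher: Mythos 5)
Your proof is correct and takes essentially the same route as the paper's: both isolate the $\ad$-dependent part of $V(\ad)$ through the orthogonality of the residual $Y_i-\mu_0^\star(Z_i)$ to functions of $Z_i$, and both invoke the continuity built into $\ClassVar$ to merge the one-sided limits at the cutoff into values at $X_i=0$. The only difference is bookkeeping: the paper writes $\ad=\ad_0+(\ad-\ad_0)$ and uses the antisymmetry $\mu_0^+-\ad_0=-(\mu_0^--\ad_0)$ to cancel the cross terms, whereas you expand the variances into bilinear forms and complete the square at the end --- the same computation in a different order.
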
	
	
	Theorem~\ref{th:var}   introduces a function class $\ClassVar$ that, similarly to the class $\mc E$ above, enforces some technical integrability conditions. 
	The theorem shows that $V(\ad^{(a)}) < V(\ad^{(b)})$ for 
 generic adjustment functions $\ad^{(a)}$ and $\ad^{(b)}$
 if and only if $\V[\ad_0(Z_i)-\ad^{(a)}(Z_i)|X_i=0] < \V[\ad_0(Z_i)-\ad^{(b)}(Z_i)|X_i=0]$. That is, the ``closer'' (in a particular $L_2$-sense) the adjustment function is to the optimal one, the smaller the asymptotic variance. In consequence, the lowest possible value of $V(\bar\ad)$  is achieved for $\bar\ad=\ad_0$.
Even if $\bar\ad \neq \ad_0$, our flexible covariate adjustment RD estimators typically still have smaller asymptotic variances than the no covariates and linear adjustment RD estimators. Specifically,   $V(\bar\ad) < V_{base}$ if and only if $\V[\ad_0(Z_i)-\bar\ad(Z_i)|X_i=0] < \V[\ad_0(Z_i)|X_i=0]$, i.e.\ whenever $\bar\ad(Z_i)$ captures \emph{some} of the variance of $\ad_0(Z_i)$ among units near the cutoff; and $V(\bar\ad) < V_{lin}$ if and only if $\V[\ad_0(Z_i)-\bar\ad(Z_i)|X_i=0] < \V[\ad_0(Z_i)-Z_i^\top\gamma_0|X_i=0]$, i.e.\ whenever $\bar\ad$ is ``closer'' to $\ad_0$ in our particular $L_2$-sense than the population linear adjustment function. We note that this property should be satisfied for all commonly used machine learning algorithms; and that it can be satisfied mechanically by considering ensemble methods, such as super learning, that include a simple fixed adjustment or a simple linear adjustment as parts of the ensemble.
	
 \begin{theorem}\label{th:bandwidth}   
 Let 
 $\textnormal{AMSE}(h,\ad) = h^4 B_{\sss base}^2 + (nh)^{-1}V(\ad)$ be the approximate mean squared error of $\htau(h; \ad)$ with $B_{\sss base} \neq0$, and let
 $h_{\sss{AMSE}}(\eta) =\argmin_h \textnormal{AMSE}(h,\ad)= n^{-1/5} \left( V(\ad)/4 B_{\sss base}^2\right)^{1/5}$ be the corresponding optimal bandwidth.
Then for any pair of adjustment functions $\ad^{(a)},\ad^{(b)}\in \ClassVar$ with $v(\ad^{(a)}),v(\ad^{(b)})>0$,
$$\frac{h_{\sss{AMSE}}(\ad^{(a)})}{h_{\sss{AMSE}}(\ad^{(b)})} =\left( \frac{v(\ad^{(a)})}{v(\ad^{(b)})}\right)^{1/5} \quad\text{and}\quad \frac{\textnormal{AMSE}(h_{\sss{AMSE}}(\ad^{(a)}),\ad^{(a)})}{\textnormal{AMSE}(h_{\sss{AMSE}}(\ad^{(b)}),\ad^{(b)})} = \left( \frac{v(\ad^{(a)})}{v(\ad^{(b)})}\right)^{4/5},$$
where $v(\ad) = \V[M_i(\ad) |X_i=0^+]+\V[M_i(\ad)|X_i=0^-]$.
 \end{theorem}
Theorem~\ref{th:bandwidth} implies that flexible covariate adjustments can reduce the approximate mean squared error of our estimator not only directly through a smaller asymptotic variance term but also indirectly through a change in the optimal bandwidth and a corresponding reduction in bias. That is, if $V(\ad^{(a)}) < V(\ad^{(b)})$ for 
 generic adjustment functions $\ad^{(a)}$ and $\ad^{(b)}$, then the optimal bandwidth $h_{\sss{AMSE}}(\ad^{(a)})$ is smaller than $h_{\sss{AMSE}}(\ad^{(b)})$, and the corresponding estimator 
  $\htau(h_{\sss{AMSE}}(\ad^{(a)}); \ad^{(a)})$ has both smaller asymptotic bias \emph{and} smaller asymptotic variance than $\htau(h_{\sss{AMSE}}(\ad^{(b)}); \ad^{(b)})$.
	
	\section{Additional Theoretical Results and Discussions}\label{sec:Implementation}
	
	\subsection{Bandwidth Choice and Inference}\label{sec:Inference_Overview}
We formally show in Appendix~\ref{sec:Inference} that standard methods for bandwidth choice and confidence interval construction based on the no covariates RD estimator maintain their general asymptotic properties when they are applied    to the generated data set $\{(X_i,M_i(\wh\ad_{s(i)}))\}_{i\in[n]}$ without any adjustment for the sampling uncertainty about the estimated adjustment function. Specifically, we derive three groups of results, all under conditions that are rather weak and  analogous to those commonly imposed  in setups without covariates.
 
	 First,  we show  that the nearest neighbor standard error~\eqref{eq:fca_se} is consistent, in the sense that	
	$$nh\,\hse^2( h; \wh \ad) /V(\bar\ad)\overset{
p}{\to}1.$$
 Second,  we show that commonly used methods for confidence interval construction achieve correct asymptotic coverage.	For example, assuming a bound on $|\partial^2_x \E[M_i(\bar\ad)|X_i=x]|$, the absolute value of the second derivative of the conditional expectation of the adjusted outcome  given the running variable, one can construct a ``bias-aware'' confidence interval as in \citet{armstrong2020simple} as
	$$ 
	CI^{ba}_{1-\alpha} = \left[\htau(h; \wh \ad) \pm z_\alpha(\bar b(h)/\hse( h; \wh \ad)) \,\hse( h; \wh \ad)\right].
	$$ 
	Here $z_\alpha(r)$ is the $(1-\alpha)$-quantile of $|N(r,1)|$, the folded normal distribution with mean $r$ and variance one, and $\bar b(h)$ is an explicit bound on the finite-sample bias of $\wh\tau(h;\bar\eta)$ given in the appendix. 	Alternatively, one can also construct a ``robust bias correction'' confidence interval as in \citet{calonico2014robust} by subtracting a local quadratic estimate of the first-order bias of $\htau(h; \wh \ad)$ from the estimator, and adjusting the standard error appropriately. This yields a confidence interval of the form
	\[
	CI^{rbc}_{1-\alpha} = \big[\htau^{rbc}(h; \wh \ad) \pm z_\alpha \hse^{rbc}( h; \wh \ad)\big],
	\]
	where  $z_\alpha = z_\alpha(0)$ and the other terms are formally defined in the appendix.
Third, we show  that the MSE-optimal bandwidth selector $\wh h_{n}$ of \cite{calonico2014robust}, which is similar to that of \citet{imbens2012optimal}, consistently estimates the AMSE-optimal bandwidth $h_{\sss{AMSE}}(\bar \ad)$ defined in Theorem~\ref{th:bandwidth}, in the sense that
$$ \wh h_{n} / h_{\sss{AMSE}}(\bar \ad)  \overset{p}{\to} 1.$$
RD estimation and inference with flexible covariate adjustments are thus easy to implement with existing software packages.

	\subsection{Analogies with Randomized Experiments}\label{sec:ATE}
		
	The results in Section~\ref{sec:Results} are qualitatively similar to ones obtained for efficient influence function (EIF) estimators of the population average treatment effect (PATE) in  randomized experiments  with known and constant propensity scores \citep[e.g.,][]{wager2016high,chernozhukov2018double}.
	To see this, consider a randomized experiment with unconfounded treatment assignment and known constant propensity score $p$. Using our notation in an analogous fashion, the  EIF of the PATE in such a setup is typically written  in the form 
	$$\psi_i(m_0^0, m_0^1) =  m_0^1(Z_i) - m_0^0(Z_i) + \frac{T_i(Y_i-m_0^1(Z_i))}{p} - \frac{(1-T_i)(Y_i-m_0^0(Z_i))}{1-p}, $$
	where $m_0^t(z) = \E[Y_i|Z_i=z,T_i=t]$ for $t \in \{0,1\}$ \citep[e.g.,][]{hahn1998role}. The minimum variance that any regular estimator of the PATE can achieve is thus $V_{\sss \textnormal{PATE}}=\V(\psi_i(m_0^0, m_0^1))$. By randomization, it also holds that
	$\tau_{\sss \textnormal{PATE}} = \E[\psi_i(m^0, m^1 )]$ for all (suitably integrable) functions $m^0 $ and $m^1$. The PATE is thus identified by a moment function that satisfies a global invariance property.
	A sample analog estimator of $\tau_{\sss \textnormal{PATE}}$ based on this moment function has asymptotic variance $V_{\sss \textnormal{PATE}}$   if $\wh m^t$ is a consistent estimator of $m_0^t$ for $t\in\{0,1\}$, but remains consistent and asymptotically normal with asymptotic variance $\V(\psi_i(\bar m^0, \bar m^1))$ if $\wh m^t$ is consistent for some other function $\bar{m}^t$, $t\in\{0,1\}$. The convergence of $\wh m^t$ to  $\bar{m}^t$ can be arbitrarily slow for these results \citep[e.g.][]{wager2016high,chernozhukov2018double}. 
	
	The qualitative parallels between these findings and ours in Section~\ref{sec:Results} arise because our covariate-adjusted RD estimator is in many ways a direct analog of such EIF estimators. To show this,   write $ m(z)=  (1- p){m}^1(z) + p\,{m}^0(z)$ for any two functions $m^0$ and $m^1$, so that $ m_0(z)=  (1- p){m}_0^1(z) + p\,{m}_0^0(z)$. The PATE's influence function can then be expressed as
	$$\psi_i({m}_0^0, {m}_0^1) =  \frac{T_i(Y_i- m_0(Z_i))}{p} - \frac{(1-T_i)(Y_i-m_0(Z_i))}{1-p}, $$
	and it holds that $$\E[\psi_i(m^0, m^1 )]=\E [Y_i- m(Z_i)|T_i=1] - \E[Y_i-  m(Z_i)|T_i=0],$$ which is the difference in average covariate-adjusted outcomes between treated and untreated units. This last equation is fully analogous to our equation~\eqref{eq:EMu}, with $p=1/2$, and conditioning on $T_i=1$ and $T_i=0$ replaced by conditioning on $X_i$ in infinitesimal right and left neighborhoods of the cutoff (the value $p=1/2$ is appropriate here because  continuity of the running variable's density implies that an equal share of units close to the cutoff can be found on either side). An EIF estimator of $\tau_{\sss \textnormal{PATE}}$  is thus analogous to our estimator $\htau(h; \wh\ad)$, as they are both sample analogs of a moment function with the same basic properties.\footnote{We note that neither  in our setting nor with EIF estimation of the PATE would replacing the known propensity score with some empirical estimate result in any efficiency gains. The finding from \cite{hahn1998role} that using an estimated propensity score can be more efficient than using the true one refers to inverse probability weighting (IPW) type estimators and does not apply to the EIF type estimators we consider here.}

	\subsection{Fuzzy RD Designs}\label{sec:fuzzy}
	In fuzzy RD designs, units are assigned to treatment if their realization of the running variable falls above the threshold value, but might not comply with their assignment. The conditional treatment probability given the running variable hence changes discontinuously at the cutoff, but in contrast to sharp RD designs it does not jump from zero to one.  The parameter of interest in fuzzy RD designs is 
	$$
	\theta = \frac{\tau_Y}{\tau_T}   \equiv \frac{\E [Y_i|X_i=0^+] - \E [Y_i|X_i=0^-]}{\E [T_i|X_i=0^+]- \E [T_i|X_i=0^-]},
	$$
	which is the ratio of two sharp RD estimands (throughout this subsection, the notation is analogous to that used before, with the subscripts $Y$ and $T$ referencing the respective outcome variable). Under standard conditions \citep{hahn2001identification, dong2014alternative}, one can interpret $\theta$ as the average causal effect of the treatment among units at the cutoff whose treatment decision is affected by whether their value of the running variable is above or below the cutoff.

	Similarly to sharp RD designs, predetermined covariates can be used in fuzzy RD designs to improve efficiency. Building on our proposed method, we consider estimating $\theta$ by the ratio of two generic flexible covariate-adjusted sharp RD estimators:
	\[
	\htheta(h; \wh \ad_Y,\wh \ad_T) =  \frac{ \htau_{Y} (h; \wh \ad_Y )}{\htau_{T} (h; \wh \ad_T ) }  =\frac{ \Sum w_{i}(h) (Y_i- \wh \ad_{Y, s(i)}(Z_i)) }{ \Sum w_{i}(h) (T_i- \wh  \ad_{T, s(i)}(Z_i)) }.
	\]

	\begin{proposition}\label{prop:fuzzy}
		Suppose that Assumptions~\ref{ass:1stage}--\ref{ass:reg2} hold also with $T_i$ replacing $Y_i$, mutatis mutandis, and $\tau_T \neq 0$. 		\begin{itemize}
			\item[(i)]  It holds that
			\[
			\sqrt{nh}\, V_\theta(\bar \ad_Y,\bar \ad_T)^{-1/2} \left( \htheta(h;\wh \ad_Y, \wh \ad_{\sss T})  -\theta - B_\theta(\bar \ad_Y,\bar\ad_T)h^2 \right) \stackrel{d}{\to} \mathcal{N}(0, 1),
			\]
			where
			\begin{align*}
				B_\theta(\bar\ad_Y,\bar\ad_T) & =\frac{ \bar{\nu}}{2 \tau_T}  \left(\partial_x^2 \E[ Y_i- \theta T_i|X_i=x]\big|_{x=0^+}-\partial_x^2 \E[ Y_i- \theta T_i |X_i=x]\big|_{x=0^-}\right) + o_P(1),\\
				V_\theta(\bar\ad_Y,\bar\ad_T) & =\frac{ \bar{\kappa} }{f_X(0)}  \left( \V[ U_i(\bar\ad_Y,\bar\ad_T)|X_i=0^+] + \V[ U_i(\bar\ad_Y,\bar\ad_T)|X_i=0^-] \right),
			\end{align*}
			and $U_i(\bar\ad_Y,\bar\ad_T)=\left(Y_i- \theta T_i - (\bar\ad_Y(Z_i) - \theta \bar\ad_T(Z_i))\right)/\tau_T$.
			\item[(ii)] 
			Suppose additionally that the assumptions of Theorem~\ref{th:var} hold, mutatis mutandis, also with $T_i$ replacing $Y_i$ and the definition of $\ClassVar$ adjusted accordingly. Then, for any  $\ad_Y^{(a)},\ad_Y^{(b)}, \ad_T^{(a)},\ad_T^{(b)} \in \ClassVar$,  
			it holds that
			\begin{align*}
				V_\theta& (\ad_Y^{(a)},\ad_T^{(a)}) - V_\theta(\ad_Y^{(b)},\ad_T^{(b)}) \\
				& = 	\frac{2 \bar\kappa}{\tau_T^2 f_X(0)}\left(  \V[\ad_{Y,0}(Z_i)- \theta \ad_{T,0}(Z_i) -(\ad_{Y}^{(a)}(Z_i)- \theta \ad_{T}^{(a)}(Z_i))|X_i=0] \right.\\
				& \qquad \left. -  \V[\ad_{Y,0}(Z_i)- \theta \ad_{T,0}(Z_i) -(\ad_{Y}^{(b)}(Z_i)- \theta \ad_{T}^{(b)}(Z_i))|X_i=0]\right) .
			\end{align*}
		\end{itemize}
	\end{proposition}
	
The first part of the proposition shows that our flexible covariate-adjusted fuzzy RD estimator is asymptotically normal, with asymptotic variance that depends on the population counterparts $\bar\eta_Y$ and $\bar\eta_T$ of the two estimated adjustment functions. This result can then be used to construct a confidence interval for $\theta$ based on the t-statistic. Alternatively, confidence sets for $\theta$ can be constructed via an Anderson-Rubin-type approach, which circumvents certain problems of ratio estimators \citep{noack2021bias}.
	
The second part of the proposition shows that the asymptotic variance of our estimator is minimized if the estimated adjustment functions concentrate around $\bar\eta_Y=\eta_{Y,0}$ and $\bar\eta_T=\eta_{T,0}$, respectively. That is, the optimal adjustment functions for fuzzy RD designs can be obtained by separately considering two covariate-adjusted sharp RD problems with outcomes $Y_i$ and $T_i$, respectively. This holds because for fixed adjustment functions $\ad_Y$ and $\ad_T$ we have that $\htheta(h; \ad_Y, \ad_T)  -\theta$ is 	first-order asymptotically equivalent to a sharp RD estimator with the infeasible outcome $U_i(\ad_Y,\ad_T)=\left(Y_i- \theta T_i - (\ad_Y(Z_i) - \theta \ad_T(Z_i))\right)/\tau_T$. By our Theorem~\ref{th:var},  the asymptotic variance of $\htheta(h; \ad_Y, \ad_T)$ is minimized if $(\ad_Y(Z_i) - \theta \ad_T(Z_i))/\tau_T$ equals the optimal adjustment function for the outcome $\left(Y_i- \theta T_i\right)/\tau_T$. By linearity of conditional expectations, this holds if $\eta_Y=\eta_{Y,0}$ and $\eta_T=\eta_{T,0}$.

	\subsection{Variants of Cross-Fitting} \label{sec:alternative_crossfitting}
	
	We note that instead of the type of cross-fitting described in Section~\ref{sec:Estimator}, which is analogous to the DML2 method in  \citet{chernozhukov2018double}, one could also consider an analog of their DML1 method,  which creates an overall  estimate by averaging separate estimates from each data fold. In our context, this would yield an estimator of the form
	$$ \htau_{alt}( h; \wh \ad) = \frac{1}{S}\sum_{s \in[S]} \sum_{i \in I_s} w_{i, s}(h) M_i(\wh \ad_s),$$
	where $w_{i, s}(h)$ is the local linear regression weight of unit $i$ using only data from the $s$-th fold; see Appendix~\ref{A:sec:weights}. Under the conditions of Theorem~\ref{th:Equivalence}~and~\ref{th:Normality}, we see from their proofs that
	\begin{equation}\label{eq:DML1}
		\htau_{alt}(h; \wh \ad)  -  \wh \tau_{alt}(h; \bar\ad) = O_P(r_n(nh)^{-1/2} + v_{2,n}h^2 ),
	\end{equation}
	and the estimators $\htau(h; \wh \ad)$ and $\htau_{alt}(h; \wh \ad)$ have the same first-order asymptotic distribution. Comparing the rate in~\eqref{eq:DML1} to that in Theorem~\ref{th:Equivalence} shows that the alternative implementation removes a term of order   $O_P(v_{1,n}h(nh)^{-1/2})$ from the expansion of the feasible estimator about its respective infeasible analogue.  We still prefer our proposed implementation of cross-fitting despite this improvement in second-order asymptotic properties	because it allows existing routines for bandwidth selection and confidence interval construction to be applied directly to the generated data set $\{(X_i,M_i(\wh\ad_{s(i)}))\}_{i\in[n]}$, as discussed in Section~\ref{sec:Inference_Overview}.\footnote{\cite{velez2024asymptotic} also argues in the context of a setting with regular parameters DML2 should be preferred to DML1 due to better bias and mean-squared error properties under particular asymptotic regimes.}

\section{Practical Performance}\label{sec:practical_performance}
To illustrate the scope for efficiency improvements that flexible covariate adjustments can achieve in practically relevant settings, we applied our method to a number of  recent RD studies. Specifically, we collected data from all articles that appeared between 2018 and 2023 in the main AEA journals for applied microeconomic research, fit into our general framework, use covariates, and have directly available public replication data. We found 16 such papers with a total of 56 main specifications.
For each of these specifications, we computed the length of bias-aware 95\% confidence intervals for the respective RD parameter based on local linear estimators that use our flexible covariate adjustments, linear adjustments, and no covariate adjustments, respectively.\footnote{We implement the linear adjustment using cross-fitting to ensure a fair benchmark for our flexible adjustment. In Section~\ref{sec:Simulations}, we illustrate in simulations  that the standard error based on the conventional linear adjustment may be downward-biased in settings where the number of covariates is large relative to the effective sample size.} For illustration purposes, we use the same ``no covariates'' smoothness bound for all estimators within each specification.\footnote{Specifically, we use the rule of thumb for the smoothness bound of \cite{imbens2017optimized}, which equals twice the maximal second derivative of a second-order global polynomial fitted on each side of the cutoff. The smoothness bound is calibrated based on the original outcomes. The results in the main text are not very sensitive to specific choices of the smoothness bounds.} This analysis captures the effect of covariate adjustments on the confidence interval length that is due to a reduction in variance, abstracting from other issues. In Section~\ref{sec:additional_empirical_results} of the Online~Supplement we also provide results where the smoothness bound is calibrated based on the adjusted outcomes as well as results based on the robust bias correction.  Appendix~\ref{sec:appendix_empirical_results} contains further details on the implementation and the data collection process, and Table~\ref{table::overview_of_papers} in the Online Supplement provides the complete list of papers and specifications used. 

\begin{figure}
	\centering
	\includegraphics[scale=0.8]{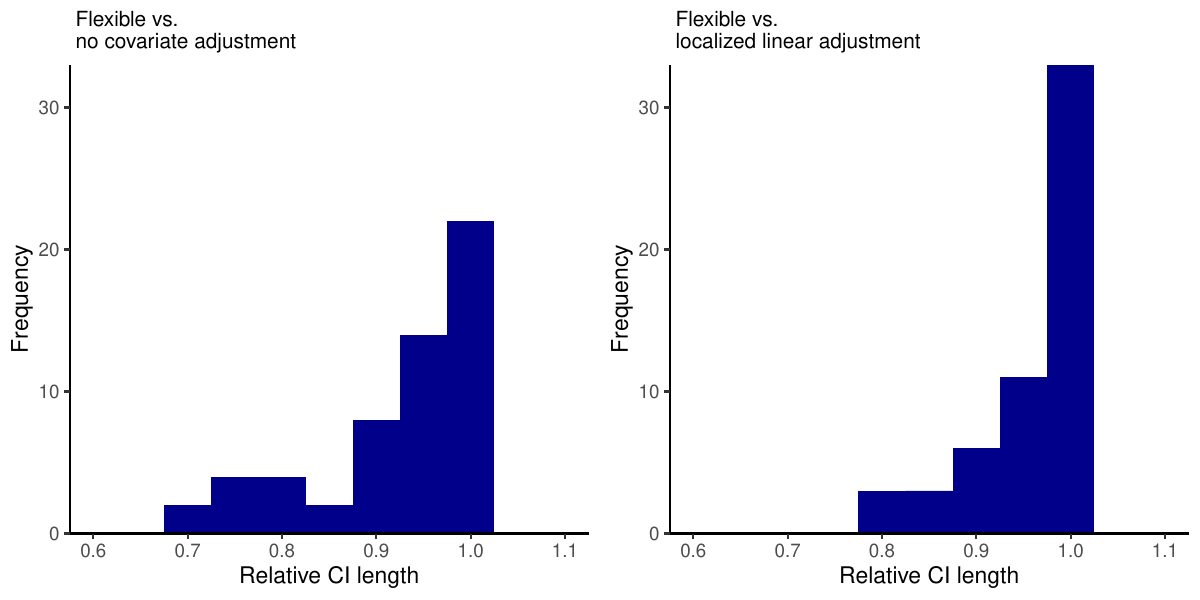}
	\caption{Bias-aware confidence interval (CI) lengths with flexible covariate adjustment relative to bias-aware  confidence interval length with no covariates and with cross-fitted localized linear adjustment for all specifications of the empirical literature survey. The smoothness constants are calibrated via the rule of thumb of \cite{imbens2017optimized} using the original outcomes.}
	\label{fig:applications_flexCA}
\end{figure}

 Figure~\ref{fig:applications_flexCA} shows the distribution of the ratio of confidence interval lengths for flexible adjustments relative to no adjustments in its left panel, and for flexible adjustments relative to linear adjustments in its right panel.  We first note  that the confidence intervals with flexible adjustments are never noticeably wider than those of the no covariates or linear adjustment  RD estimators. From the left panel, we can see that in nearly half of our specifications the flexible covariate adjustments yield confidence intervals that are not noticeably shorter than the ones obtained without covariate adjustments. Given the flexibility of our methods, this suggests that the covariates are not informative about the outcome in these specifications, and hence, there is no scope for efficiency gains. In many specifications, however, flexible covariate adjustments lead to substantially shorter confidence intervals, with the biggest reduction exceeding 30\%. To put this into perspective, note that one would have to increase the sample size used by the no covariates RD estimator by a factor of about 2.4 to achieve a similar reduction in the length of the confidence interval.
 
 The confidence interval length reductions shown in the right panel are smaller than those in the left panel, which is due to the fact that linear adjustments already capture some variation of the outcome given covariates. Nevertheless, the confidence intervals based on flexible adjustments, which account also for nonlinear relationships between the outcomes and the covariates, can be substantially shorter, with the biggest reduction reaching 20\% in our empirical exercise.

\section{Simulations}\label{sec:Simulations}

In this section, we investigate the finite-sample properties of our  proposed flexible covariate adjustment RD estimators under realistic conditions in two simulation studies. The first study's purpose is to show that our theoretical results provide accurate approximations to our estimator's actual finite sample properties, whereas the second study's purpose is to document how the properties of our estimator and those of existing methods are affected if the number of covariates becomes large relative to the effective sample size.

\subsection{General Setup}\label{sec:NumericalImplementation}
Our simulations are based on real data from \cite{londono2020upstream}, who study the impact of  merit-based college financial aid for low-income  students in a sharp RD design. Their data contain the outcome variable, a dummy for immediate enrollment in any post-secondary education, the running variable, a test score,\footnote{\cite{londono2020upstream} consider two different test scores as running variables. We focus on the \textit{SABER~11} test score in this section as it is available for a larger number of data points.} and 21 covariates, namely age, family size, indicators for gender,  ethnicity, employment status, parent's education, household residential stratum, high school schedule, and high school type. Our simulations involve repeatedly drawing random samples from a version of the data that is restricted to the $n=259,419$ observations with test scores below the original treatment threshold (so that none of the students remaining in the data set are actually assigned to treatment), and then estimating the effect of a placebo treatment ``received'' by students with test scores above the median test score value. We use either the original outcome (enrollment in any post-secondary education) or age (one of the original covariates) as the dependent variable. These two dependent variables correspond to settings in which covariate adjustments achieve almost no and quite substantial efficiency gains, respectively; see the RD estimates from the entire restricted data in Table~\ref{table::sim_full_sample_results} in the Online Supplement for details. In the main text, we conduct inference using the bias-aware approach. All simulations are based on 10,000 Monte Carlo draws.
			
\afterpage{
	\begin{table}
		\centering
		\caption{Main results for Simulation I with bias-aware inference.}\label{table::sim_baseline}
		\resizebox*{1\textwidth}{!}{
			\begin{threeparttable}
				\begin{tabular}{p{3.8cm}p{1.3cm}p{1cm}p{1cm}p{1cm}p{1cm}p{1cm}p{1.5cm}p{1.8cm}p{1.8cm}}
					\toprule
					\multicolumn{2}{l}{Adjustment Method} & SE x100 & SD x100 & Bias x100 & RMSE x100 & Band-width & CI Cov in \% & CI Length x100 &   CI Length Reduction in \%  \\ 
					\midrule
					\multicolumn{10}{l}{\textbf{Panel A - Original Outcome}}\\[1ex] 
					No Covariates &  & 2.15 & 2.17 & 0.41 & 2.21 & 37.73 & 96.95 & 9.41 & 0 \\[1ex] 
					Conventional Linear &  & 2.09 & 2.13 & 0.47 & 2.19 & 38.71 & 96.51 & 9.14 & 2.86\\[1ex] 
					Localized Linear & Feasible & 2.10 & 2.13 & 0.45 & 2.18 & 39.25 & 96.63 & 9.18 & 2.39 \\ 
					& Oracle   & 2.10 & 2.13 & 0.48 & 2.18 & 39.08 & 96.67 & 9.18 & 2.44  \\[1ex] 
					\multirow[t]{2}{*}{Flexible} & Feasible  & 2.11 & 2.13 & 0.44 & 2.18 & 39.03 & 96.69 & 9.19 & 2.28 \\ 
					& Oracle  & 2.09 & 2.12 & 0.44 & 2.16 & 39.37 & 96.73 & 9.14 & 2.81  \\[2ex] 
					\multicolumn{10}{l}{\textbf{Panel B - Age}}\\[1ex] 
				No Covariates &  & 24.95 & 25.69 & 2.87 & 25.84 & 42.25 & 96.84 & 110.79 & 0 \\[1ex] 
					Conventional Linear &  & 21.55 & 22.50 & 2.64 & 22.66 & 41.96 & 96.24 & 
					\phantom{1}95.90 & 13.44   \\[1ex] 
					Localized Linear & Feasible  & 21.84 & 22.56 & 2.64 & 22.72 & 42.09 & 96.34 & 
					\phantom{1}97.05 & 12.40\\ 
					& Oracle  & 21.71 & 22.43 & 2.69 & 22.59 & 42.07 & 96.52 & 
					\phantom{1}96.53 & 12.87 \\[1ex] 
					\multirow[t]{2}{*}{Flexible} & Feasible  & 21.35 & 22.10 & 2.65 & 22.25 & 42.25 & 96.48 & 
					\phantom{1}94.87 & 14.37  \\ 
					& Oracle  & 20.94 & 21.82 & 2.96 & 22.02 & 42.75 & 96.26 & 
					\phantom{1}93.06 & 16.00  \\ 
					\bottomrule
				\end{tabular}
				\begin{tablenotes}
					\footnotesize	\setlength\labelsep{0pt}
					\item \textit{Notes:} 
					Results are based on 10,000 Monte Carlo draws and a sample size of $n=5000$ (see Section~\ref{sec:Simulations} for details). The data-generating process is based on \cite{londono2020upstream}.					
					The bandwidth is chosen and the confidence sets are constructed based on bias-aware inference. The columns show results for  simulated mean standard error (SE);   standard deviation (SD);   bias (Bias);  root mean squared error (RMSE); the average  bandwidth (Bandwidth); 
					coverage of confidence intervals with 95\% nominal level (CI Cov); the average  confidence interval length (CI Length); and the mean confidence interval length relative to the no covariates confidence interval length (CI Length Reduction). Estimators are described in Section~\ref{sec:Covariate_adj}. 
				\end{tablenotes}
		\end{threeparttable}}
	\end{table}
}

\subsection{Simulation I: Moderate number of covariates}\label{sec:simulations_big}
In this simulation study, we evaluate the finite-sample performance of our methods in a typical RD setting with a moderate number of covariates and a relatively large number of observations. Specifically, we consider estimation with the original baseline covariates and samples of size 5000, which is around the median of the sample sizes of the empirical applications of our literature survey.
We apply our flexible covariate adjustment discussed in Section~\ref{sec:proposed_method}. Additionally, we consider the deterministic approximations of all the feasible adjustment methods, which were obtained by running the respective method on the full restricted dataset. By comparing the feasible adjustments and their respective deterministic approximation, we can assess the quality of the approximation in our equivalence result of Theorem~\ref{th:Equivalence}. For comparison, we also report the results without covariate adjustments and with  conventional linear adjustments. 
For each adjustment method, we select the bandwidth and construct a confidence interval using the bias-aware approach with the smoothness bound calibrated using the adjusted outcomes via the rule of thumb of \cite{imbens2017optimized} in each Monte Carlo draw.\footnote{The number of effective observations used in the second stage is on average around 3447 for the original outcome and around 3622 for age as the dependent variable.}
The results are based on 5-fold cross-fitting with one random data split. 

Table~\ref{table::sim_baseline} reports the main results of this simulation study. Our methods work very well for both dependent variables and all types of adjustments in that the mean simulated standard errors are close to the simulated standard deviations and the confidence intervals have simulated coverage rates close to the nominal one. The confidence intervals are slightly conservative, which is typical in bias-aware inference. The changes in the mean bias for different types of adjustments are negligible relative to the standard deviation, which is consistent with our conjecture that covariate adjustments should typically have no first-order effect on the leading bias constant. 

In Panel~A, the covariates have essentially no impact on the dependent variable, and so none of the methods leads to noticeable reductions in the standard deviation. In Panel~B, where the covariates have some explanatory power for the dependent variable, the cross-fitted RD estimator with localized linear adjustment yields a confidence interval that is on average 12\% shorter than the no covariates confidence interval. The flexible adjustment improves this performance even further. As can be expected in a setting with a small number of covariates relative to the sample size, the conventional and cross-fitted localized linear covariate adjustments yield similar results. 

In Appendix~\ref{sec:AppendixSimulation} of the Online Supplement, we present additional simulation results for all individual adjustment methods described in Section~\ref{sec:proposed_method}. We further investigate the asymptotic equivalence result presented in Theorem~\ref{th:Equivalence} within this simulation design and we illustrate that the estimation uncertainty of the adjustment functions is indeed asymptotically  negligible relative to the overall estimation uncertainty of the RD estimators. Additionally, we  present  estimation and inference results based on robust bias corrections. The qualitative conclusions remain very similar to those presented above.

\subsection{Simulation II: Many covariates}\label{sec::simulations_many_cov} This simulation setting is motivated by the fact that in some empirical applications of our literature survey, researchers estimate specifications where the ratio of the effective sample size to the number of covariates is relatively small. Indeed, in Table~\ref{table::overview_of_papers}, the minimal value of this ratio is 2.7 and  it falls below 20 in about one fifth of the specifications. 

To mimic settings where there are many covariates relative to the effective sample size, we sample 500 observations without replacement within a distance of 25 from the  placebo cutoff and use all of them in the RD regressions, i.e. we use a fixed bandwidth $h=25$. We create additional covariates by generating all second-order interaction terms of the original covariates, and we consider different settings by including the first 2, 10, 50, 100, and 150 of these covariates.\footnote{The first 21	of the technical covariates correspond to the original covariates, followed by the interaction terms. Since the covariates have essentially no explanatory power for the original outcome, the exact order of inclusion does not affect the results in this section.}
 In this setting, the ratio of the effective sample size to the number of covariates lies in the range between 250 and 3.33, which corresponds to the settings in our literature analysis with small values of this ratio. 
For each subsample, we estimate the RD parameter using the no covariates, the conventional linear adjustment,  and our cross-fitted RD estimator with localized linear adjustments and localized random forest adjustments.\footnote{We chose the random forest to represent the machine learning adjustments here, but the qualitative results are similar when employing other methods. In this section, we focus on the individual adjustment methods, rather than on the flexible ensemble, to offer more direct insights into the mechanics of the linear and regularized adjustments.} In this simulation, we calibrated the smoothness constant for each estimator and number of covariates using the full sample and we kept them fixed across simulation draws. The results are based on the bias-aware inference approach and $B=11$ data splits for each Monte Carlo draw.\footnote{In Simulation II, we calibrated the smoothness bound for each method and number of covariates only once, using the rule of thumb of \citet{imbens2017optimized} and oracle adjusted outcomes based on the full restricted sample described in Section~\ref{sec:simulations_big}.}

\subsubsection{Results on efficiency} 

\begin{figure}[!h]
	\centering
	\includegraphics{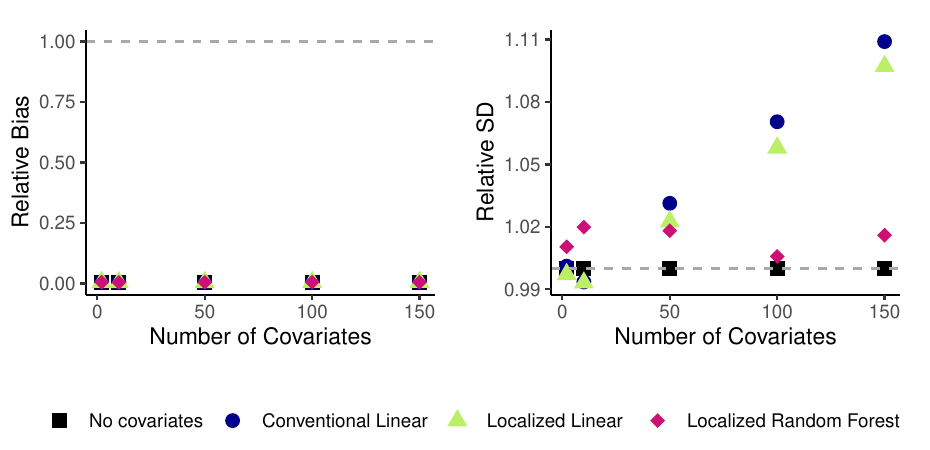}
	\vspace{-3pt}	

	\caption{Results for Simulation II - Bias and standard deviation of the respective estimator relative to the standard deviation of the no covariates RD estimator in the left and right panels, respectively. The results are based on samples of size $n=500$ within the estimation window with $h=25$ and 10,000 Monte Carlo draws.}\label{fig:sim_manyn_sd}
\end{figure}

Figure~\ref{fig:sim_manyn_sd} shows the bias and the standard deviation of the four estimation methods,  normalized by the standard deviation of the no covariates RD estimator, for a varying number of covariates.
We note that the simulated bias is insensitive to including many covariates, and we therefore focus on the standard deviation. 
In this setting, the covariates seem to have essentially no explanatory power for the dependent variable, and so adjustments based on them cannot lead to a reduction in the asymptotic variance of the RD estimator; see estimation results in Table~\ref{table::sim_full_sample_results}. As predicted by our theory, when the number of covariates remains moderate, all estimators perform very similarly, meaning that all the adjustments concentrate around the optimal function of no adjustment.
However, as the number of covariates increases, the standard deviations of both the conventional and cross-fitted localized linear adjustment estimators become substantially larger than those of the no covariates RD estimator. 
The reason for that is that the linear regression with a large number of covariates is very variable, such that the estimated adjustments are no longer close to a constant\footnote{Such finite-sample behavior renders our asymptotic theory as well as the results of \citet{calonico2019regression} inapplicable in this setting.} and the high-dimensional linear adjustments effectively add non-negligible noise to the outcome variable in this setting.

In contrast, the RD estimator with random forest adjustments, due to built-in regularization,  does not become more variable as the number of covariates increases, meaning that the estimated adjustment function remains close to the optimal function of no adjustment. In general, it is therefore advisable to always rely on regularized adjustments in high-dimensional settings.

\subsubsection{Results on inference}

\begin{figure}[!h]
	\centering	
	\includegraphics{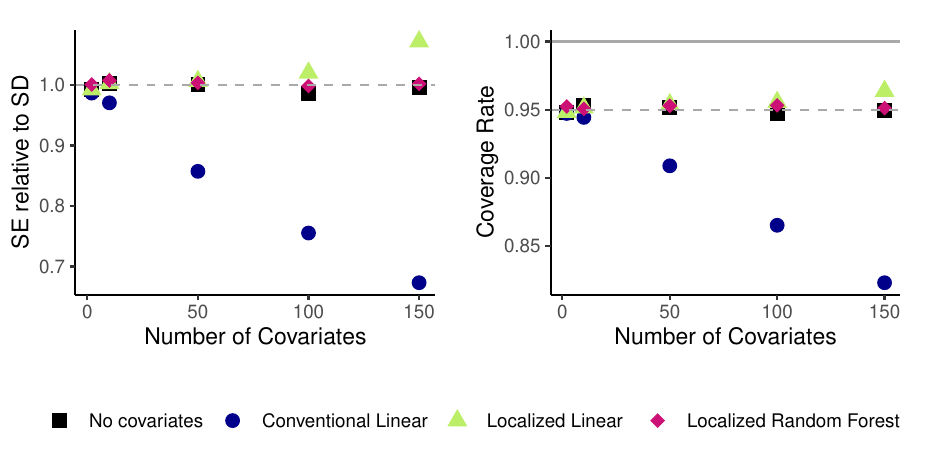}
	\vspace{-3pt}	
	
	\caption{Results for Simulation II - Mean standard error relative to the standard deviation of the respective estimator and simulated confidence interval coverage for nominal confidence level 95\%. We consider bias-aware confidence intervals and nearest neighbor standard errors.  The results are based on samples of size $n=500$ within the estimation window with $h=25$ and 10,000 Monte Carlo draws.}\label{fig:sim_manyn_inference}
\end{figure}

We now turn to the standard error and coverage of the confidence intervals for the respective methods. 
The left panel of Figure~\ref{fig:sim_manyn_inference} shows that the standard  error of the conventional linear adjustment estimator exhibits a downward bias that increases substantially with the number of covariates, with its ratio to the estimator's standard deviation reaching less than 70\% for 150 covariates. This effect is due to overfitting: with many covariates, the regression residuals that enter the standard error formula become ``too close to zero'', and standard errors therefore become ``too small''. With cross-fitting, this issue occurs neither for the linear adjustment nor for the random forest adjustment.

The right panel of Figure~\ref{fig:sim_manyn_inference} shows that, due to  increasingly biased standard errors, the coverage of the conventional linear adjustment bias-aware confidence intervals with nominal level $95\%$ falls below 85\% for 150 covariates. With cross-fitting, bias-aware confidence intervals have close to nominal coverage for both adjustment methods and all numbers of covariates under consideration. 
These simulation results demonstrate that cross-fitting yields consistent standard errors and valid inference even in high-dimensional settings where conventional methods may fail.

\section{Conclusions}\label{sec:Conclusions}
We have proposed a novel class of estimators that can make use of covariate information more efficiently than the conventional linear adjustment estimators that are currently  used widely in practice. In particular, our approach allows the use of modern machine learning tools to adjust for covariates, and is at the same time largely unaffected by the ``curse of dimensionality''. Our estimator is also easy to implement in practice, and can be combined in a straightforward manner with existing methods for bandwidth choice and the construction of confidence intervals. In our reanalysis of the literature, we show that our proposed estimator yields shorter confidence intervals in almost all empirical applications, and in some cases, these reductions can be substantial. We  therefore expect our proposed estimator to be very attractive for a wide range of future economic applications.

\appendix
\section{Proofs of the Main Results}\label{A:main}
In this section, we prove Theorems~\ref{th:Equivalence}--\ref{th:bandwidth} and Proposition~\ref{prop:fuzzy}. To this end, we show a more general result that allows for a local polynomial regression of an arbitrary order $p$. We use this result also in Appendix~\ref{sec:Inference} to establish the validity of the inference methods discussed in Section~\ref{sec:Inference_Overview}.

\subsection{Additional Notation}\label{A:sec:weights}
Let $\Xn=(X_i)_{i\in[n]}$ denote the realizations of the running variable. For $0\leq v \leq p$, we define feasible and infeasible estimators of the jump in the $v$-th derivative of the conditional expectation of the modified outcome $M(\bar\eta)$ at the cutoff using the $p$-th order local polynomial regression as
\begin{align*}
	\htau_{v,p}(h; \wh\ad) &= \Sum w_{i,v,p}(h) M_i(\wh\ad_{s(i)}) \quad\text{and}\quad	\wh{\tau}_{v,p}(h;\bar\ad) = \Sum w_{i,v,p}(h) M_i(\bar\ad),\\
	w_{i,v,p}(h)&=w_{i,v,p}^+(h)-w_{i,v,p}^-(h),\\
	w_{i,v,p}^\star(h) &= e_{v+1}^\top \left(\Sum K^\star_h(X_i) \widetilde{X}_{p,i} \widetilde{X}_{p,i}^\top\right)^{-1} K^\star_h(X_i) \widetilde{X}_{p,i} \quad\text{ for } \star \in \{+,-\},
\end{align*}
where $\widetilde{X}_{p,i}=(1,X_i/(1!),\dots,X_i^p/(p!))^\top$, $K_h(v)=K(v/h)/h$, $K^+_h(v)=K_h(v)\1{v\geq 0}$,  $K^-_h(v)=K_h(v)\1{v<0}$. The corresponding estimates of  $\beta^\star_v(\bar\ad) = \partial_x^v\E[M_i(\bar \ad)|X_i=x]|_{x=0^\star}$ are given by 
\[
\wh\beta^\star_{v,p}(h; \wh\ad) = \Sum w^\star_{i,v,p}(h) M_i(\wh\ad_{s(i)}) \quad\text{and}\quad 	\wh\beta^\star_{v,p}(h;\bar\ad) = \Sum w^\star_{i,v,p}(h) M_i(\bar\ad) \quad\text{for } \star \in \{+,-\}.
\]

\subsection{General Result}
We state and prove two theorems that generalize our Theorems~\ref{th:Equivalence} and~\ref{th:Normality}.

\begin{theorem}\label{th:Equivalence_General}
	Suppose that Assumptions~\ref{ass:1stage}--\ref{ass:reg1b} hold with $j\in\{1,...,p+1\}$ in Assumption~\ref{ass:derivatives} and functions in $\mathcal E$ are $p+1$ times continuously differentiable. Then
	$$
	\htau_{0,p}(h; \hmu)  = \wh{\tau}_{0,p}(h;\bar\ad) + O_P(t_p),
	$$
	where $t_p = r_n (nh)^{-1/2}  + \sum_{j=1}^p v_{j,n} h^j(nh)^{-1/2} + v_{p+1,n}h^{p+1} $.
\end{theorem}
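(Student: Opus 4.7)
The plan is to write
\[
\htau_{0,p}(h;\wh\ad) - \wh\tau_{0,p}(h;\bar\ad) = -\sum_{i=1}^n w_{i,0,p}(h)\,\Delta_i,\qquad \Delta_i \equiv \wh\ad_{s(i)}(Z_i) - \bar\ad(Z_i),
\]
and to condition on $\mathcal F_n = \sigma(\Xn,(\wh\ad_s)_{s\in[S]},(s(i))_{i\in[n]})$. Because $\wh\ad_s$ uses only observations outside $I_s$, for $i\in I_s$ the vector $Z_i$ is independent of $\wh\ad_s$ given $X_i$, so $\E[\Delta_i\mid\mathcal F_n] = a_s(X_i)$, where $a_s(x)\equiv \E[\wh\ad_s(Z)-\bar\ad(Z)\mid X=x,\wh\ad_s]$. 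Decomposing $\Delta_i = a_{s(i)}(X_i) + \bar\Delta_i$ with $\bar\Delta_i\equiv\Delta_i - a_{s(i)}(X_i)$ then splits the expression into a mean-zero ``variance'' part $\sum_i w_{i,0,p}(h)\bar\Delta_i$ and a ``bias'' part $\sum_i w_{i,0,p}(h)a_{s(i)}(X_i)$ that is $\mathcal F_n$-measurable.

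For the variance part, I would apply Chebyshev using
\[
\V\!\Big[\sum_i w_{i,0,p}(h)\bar\Delta_i\,\Big|\,\mathcal F_n\Big]\le \sup_{x\in\Xh}\E[\Delta_i^{\,2}\mid X_i=x,\mathcal F_n]\cdot\sum_i w_{i,0,p}(h)^2 = O_P(r_n^2)\cdot O_P((nh)^{-1}),
\]
where the first factor combines Assumption~\ref{ass:1stage} with the event $\wh\ad_s\in\Tn$ (which holds w.h.p.), and the second is the standard weight-norm bound for local polynomial weights under Assumption~\ref{ass:reg1}. This delivers the $O_P(r_n(nh)^{-1/2})$ piece of $t_p$.

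For the bias part, I would Taylor-expand each $a_s$ separately to order $p+1$ on each side of the cutoff. This is admissible because $\Tn\subset\Class$ makes $a_s$ sufficiently smooth, the extended Assumption~\ref{ass:derivatives} with $j\in\{1,\dots,p+1\}$ bounds the derivatives uniformly over $\Tn$ by $v_{j,n}$, and $\Tn\subset\Class$ also forces $a_s(0^+)=a_s(0^-)$ by continuity. Writing $w_{i,0,p}=w_{i,0,p}^+ - w_{i,0,p}^-$, the bias part becomes
\[
\sum_s\sum_{\star\in\{+,-\}}\!\!\pm\Bigl[a_s(0^\star)\!\!\sum_{i\in I_s} w_{i,0,p}^\star(h) + \sum_{j=1}^p\tfrac{a_s^{(j)}(0^\star)}{j!}\!\!\sum_{i\in I_s} w_{i,0,p}^\star(h)X_i^j + R_s^\star\Bigr],
\]
with $|R_s^\star|=O_P(v_{p+1,n}h^{p+1})\sum_{i\in I_s}|w_{i,0,p}^\star(h)|=O_P(v_{p+1,n}h^{p+1})$. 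The normal equations of the local polynomial regression give the exact identities $\sum_{i=1}^n w_{i,0,p}^\star(h)X_i^j = \1{j=0}$ for $j=0,\dots,p$, so that after summing over $s$ and $\star$ the $j=0$ contributions cancel exactly thanks to $a_s(0^+)=a_s(0^-)$.

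The remaining and most delicate step, which I expect to be the main obstacle, is to bound each fold-level partial sum away from the global identity. The plan is to condition on $\Xn$, use that the sample-split indicators are independent of the data with $\E[\1{s(i)=s}\mid\Xn]=1/S$ and $O(1)$ conditional variances, and apply Chebyshev to obtain
\[
\sum_{i\in I_s} w_{i,0,p}^\star(h)\,X_i^j = \tfrac{1}{S}\1{j=0} + O_P\!\left(h^j(nh)^{-1/2}\right)
\]
via the variance bound $\sum_i (w_{i,0,p}^\star)^2 X_i^{2j}\le h^{2j}\sum_i (w_{i,0,p}^\star)^2 = O_P(h^{2j}(nh)^{-1})$, where the last equality is standard under Assumption~\ref{ass:reg1}. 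Combining with $|a_s^{(j)}(0^\pm)|=O_P(v_{j,n})$ produces the $\sum_{j=1}^p v_{j,n}h^j(nh)^{-1/2}$ contribution, and collecting $R_s^\star$ gives $v_{p+1,n}h^{p+1}$, matching $t_p$. The care needed here is to make sure the Bernoulli-type variance bound and the uniform-over-$\Tn$ qualifiers from Assumptions~\ref{ass:1stage}--\ref{ass:derivatives} propagate symmetrically across $\star\in\{+,-\}$ and across all $S$ folds without degrading the implicit constants; the remaining manipulations are routine local-polynomial algebra.
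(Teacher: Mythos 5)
Your architecture is the paper's: decompose the difference into a conditional-mean (``bias'') part and a mean-zero (``variance'') part, Taylor-expand the bias to order $p+1$ on each side using the extended Assumption~\ref{ass:derivatives}, control the fold-level weight sums by $\sum_{i\in I_s}w^\star_{i,0,p}(h)X_i^j=\tfrac{1}{S}\1{j=0}+O_P(h^j(nh)^{-1/2})$ (these are exactly parts (ii)--(iv) of the paper's Lemma~\ref{lemma:X}), and bound the variance part by $O_P(r_n^2)\cdot O_P((nh)^{-1})$ via Chebyshev. The one step that would fail as written is the choice of conditioning $\sigma$-field. You condition on $\mc F_n=\sigma(\Xn,(\wh\ad_s)_{s\in[S]},(s(i))_{i})$ and assert $\E[\Delta_i\mid\mc F_n]=a_{s(i)}(X_i)$. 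But for $s'\neq s(i)$ the estimator $\wh\ad_{s'}$ is built from $\{W_j\}_{j\in I_{s'}^c}$, and $I_{s'}^c\supseteq I_{s(i)}$, so $\wh\ad_{s'}$ is a function of $W_i$ itself; conditioning on it alters the conditional law of $Z_i$ given $X_i$, so neither the stated identity for the conditional mean nor the conditional uncorrelatedness implicitly used in your variance-of-a-sum bound holds. The repair is standard and is precisely what the paper does: write the difference as $\sum_{s=1}^S G_s$ with $G_s=\sum_{i\in I_s}w_{i,0,p}(h)\Delta_i$, bound each $G_s$ conditional on $\Xn$ and $(W_j)_{j\in I_s^c}$ only (after passing to the supremum over $\Tn$ so the bounds are deterministic in $\wh\ad_s$), and sum over the fixed number of folds.

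A second, minor, imprecision: the $j=0$ contributions do not ``cancel exactly.'' The quantities $\sum_{i\in I_s}w^{+}_{i,0,p}(h)$ and $\sum_{i\in I_s}w^{-}_{i,0,p}(h)$ are distinct random variables, each equal to $1/S$ only up to $O_P((nh)^{-1/2})$, so the constant term leaves a residual $\sum_s a_s(0)\sum_{i\in I_s}w_{i,0,p}(h)$. Since $|a_s(0)|\le\sup_{\ad\in\Tn}\big|\E[\ad(Z_i)-\bar\ad(Z_i)\mid X_i=0]\big|=O(r_n)$ by Jensen and Assumption~\ref{ass:1stage}, this residual is $O_P(r_n(nh)^{-1/2})$ and is absorbed into $t_p$ --- it is in fact the first term the paper's proof records --- so the final rate is unaffected, but the step should be stated as a bound rather than an exact cancellation.
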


In the proof of Theorem~\ref{th:Equivalence_General}, we will use the following lemma that collects some standard intermediate steps in the analysis of local polynomial estimators, taking into account cross-fitting.
\begin{lemma}\label{lemma:X} Suppose that Assumptions~\ref{ass:reg1a}~and~\ref{ass:reg1b} hold. For $s \in [S]$ and $\star\in\{-,+\}$, it holds that: 
	\begin{enumerate}[label=(\roman*),wide, labelindent=0pt]
		\item $\displaystyle \frac{S}{n}\sum_{i \in I_s} K_h^\star(X_i)(X_i/h)^j = \E[K_h^\star(X_i)(X_i/h)^j] + O_p((nh)^{-1/2})$  for  $j \in \mathbb{N}$,
		\item $\displaystyle\sum_{i \in I_s} w_{i,0,p}^\star(h) = 1/S + O_p((nh)^{-1/2})$,
		\item $\displaystyle\sum_{i \in I_s} w_{i,0,p}^\star(h)X_i^j =  O_p(h^j(nh)^{-1/2})$ for $1 \leq j \leq p $,
		\item $\displaystyle\sumI |w_{i,0,p}^\star(h)X_i^j|=O_P(h^j)$ for  $j \in \mathbb{N}$,
		\item $\displaystyle\sumI w_{i,0,p}^\star(h)^2 = O_P((nh)^{-1})$.
	\end{enumerate}
\end{lemma}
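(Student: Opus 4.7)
\emph{Plan of the proof.}
The core observation is that all five statements reduce to a law-of-large-numbers statement for kernel-weighted sample averages over the fold $I_s$, together with the representation of $w_{i,0,p}^\star(h)$ through a rescaled $(p+1)\times(p+1)$ Gram matrix. I would first dispose of part (i) directly: since $\{K_h^\star(X_i)(X_i/h)^j\}_{i\in I_s}$ are i.i.d., Chebyshev's inequality gives the claim once one checks that the variance is $O(1/h)$. This in turn follows by a change of variables $u=x/h$, using $\E[(K_h^\star(X_i)(X_i/h)^j)^2] = h^{-1}\int_0^\infty K(u)^2 u^{2j} f_X(uh)\,du$, which is bounded by $C\|K\|_\infty^2/h$ on the compact support of $K$ because $f_X$ is bounded near the cutoff (Assumption~\ref{ass:reg1}\ref{ass:item:density}--\ref{ass:item:kernel}) and $|I_s| = n/S$.

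Next I would introduce the convenient rescaling $\widetilde X_{p,i} = H\widetilde X_{p,i/h}$ with $H = \operatorname{diag}(1,h,\ldots,h^p)$ and $\widetilde X_{p,i/h} = (1, X_i/h,\ldots,(X_i/h)^p)^\top$, so that
\[
w_{i,0,p}^\star(h) = \frac{1}{n}\,e_0^\top \Gamma_n^{-1}\, K_h^\star(X_i)\,\widetilde X_{p,i/h},\qquad \Gamma_n = \frac{1}{n}\Sum K_h^\star(X_i)\widetilde X_{p,i/h}\widetilde X_{p,i/h}^\top.
\]
Applying part (i) entrywise shows that $\Gamma_n = \Gamma + O_P((nh)^{-1/2})$, where $\Gamma$ is the limiting matrix with $(k+1,\ell+1)$-entry $f_X(0)\int_0^\infty K(u)u^{k+\ell}\,du$ (say, for $\star=+$). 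This $\Gamma$ is positive definite by standard kernel arguments, and the continuous mapping theorem gives $\Gamma_n^{-1} = \Gamma^{-1} + O_P((nh)^{-1/2})$.

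Parts (ii) and (iii) then fall out of the identity
\[
\sum_{i\in I_s} w_{i,0,p}^\star(h)\, X_i^j = h^j\, e_0^\top \Gamma_n^{-1}\, v_n^{(j,s)},\qquad v_n^{(j,s)} = \frac{1}{n}\sum_{i\in I_s} K_h^\star(X_i)\widetilde X_{p,i/h}(X_i/h)^j.
\]
Part (i), applied componentwise to $v_n^{(j,s)}$, gives $v_n^{(j,s)} = S^{-1}\Gamma e_j + O_P((nh)^{-1/2})$, where $e_j$ is the $(j+1)$-th canonical vector. Substituting and using $e_0^\top \Gamma^{-1}\Gamma e_j = \mathbf 1\{j=0\}$ yields both (ii) (for $j=0$) and (iii) (for $1\le j\le p$). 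Here the identity $e_0^\top\Gamma^{-1}\Gamma e_j = \mathbf 1\{j=0\}$ is simply the standard local-polynomial orthogonality built into the weights; the only novelty is that, because we sum over the fold $I_s$ rather than the full sample, the equality becomes approximate, with the $O_P((nh)^{-1/2})$ slack coming from the LLN in part (i).

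Finally, parts (iv) and (v) are crude size bounds that use only that $K$ has support $[-1,1]$ and $\widetilde X_{p,i/h}$ is therefore uniformly bounded on that support. Writing $|w_{i,0,p}^\star(h)|\le C n^{-1} K_h^\star(X_i)$ yields $\sum_{i\in I_s}|w_{i,0,p}^\star(h) X_i^j|\le C h^j n^{-1}\sum_{i\in I_s}K_h^\star(X_i) = O_P(h^j)$ by part (i) with $j=0$, and similarly $(w_{i,0,p}^\star(h))^2 \le C n^{-2} K_h^\star(X_i)^2$ combined with $\E[K_h^\star(X_i)^2]=O(1/h)$ gives (v). The main technical care throughout is in bookkeeping the rescaling by $H$ and in ensuring that the LLN in (i) holds uniformly over the finite set of polynomial orders needed, both of which are routine; the only genuinely nontrivial ingredient is the invertibility of the limiting $\Gamma$, which is standard for local polynomials with a proper kernel.
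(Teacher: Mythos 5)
The paper itself gives no argument here beyond ``the results follow from standard kernel calculations,'' and your proposal is exactly that standard argument: a Chebyshev bound for the fold-level kernel averages, the rescaling by $H=\operatorname{diag}(1,h,\dots,h^p)$, and the local-polynomial orthogonality $e_0^\top\Gamma^{-1}\Gamma e_j=\1{j=0}$. Parts (i), (iv) and (v) are fine as written (in (iv)--(v) your ``constant'' $C$ is really the $O_P(1)$ quantity $\|\Gamma_n^{-1}\|\sqrt{p+1}$, which is harmless).

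There is, however, one rate-bookkeeping slip in (ii)--(iii) that, taken literally, delivers only $o_P(1)$ rather than $O_P((nh)^{-1/2})$. You center both $\Gamma_n$ and $v_n^{(j,s)}$ at the $h\to 0$ limit $\Gamma$ with error ``$O_P((nh)^{-1/2})$.'' But part (i) only gives concentration around the finite-$h$ population moments: $v_n^{(j,s)}=S^{-1}\Gamma_h e_j+O_P((nh)^{-1/2})$ and $\Gamma_n=\Gamma_h+O_P((nh)^{-1/2})$, where $\Gamma_h:=\E[K_h^\star(X_i)\widetilde X_{p,i/h}\widetilde X_{p,i/h}^\top]$. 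The gap $\Gamma_h-\Gamma$ is a deterministic smoothing bias, only $o(1)$ under the mere continuity of $f_X$ in Assumption~\ref{ass:reg1} (and still of order $h$ under Lipschitz continuity), and for the relevant bandwidths (where $nh^3\to\infty$) this dominates $(nh)^{-1/2}$; so both displayed approximations are false at the claimed rate. The conclusion survives only because the bias cancels exactly in the product: the fold-level moment vector has expectation exactly $S^{-1}$ times the $(j{+}1)$-th column of $\Gamma_h$, whence
\[
e_0^\top\Gamma_n^{-1}v_n^{(j,s)}=S^{-1}e_0^\top\Gamma_n^{-1}\Gamma_h e_j+O_P((nh)^{-1/2})
=S^{-1}e_0^\top\Gamma_n^{-1}(\Gamma_n-E_n)e_j+O_P((nh)^{-1/2})
=S^{-1}\1{j=0}+O_P((nh)^{-1/2}),
\]
with $E_n=\Gamma_n-\Gamma_h=O_P((nh)^{-1/2})$ purely stochastic. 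You should therefore carry out the expansion around $\Gamma_h$ rather than $\Gamma$, using the limit $\Gamma$ only to certify that $\Gamma_h$ (and hence $\Gamma_n$) has smallest eigenvalue bounded away from zero for small $h$, so that $\|\Gamma_n^{-1}\|=O_P(1)$.
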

\begin{proof}
	The results follow from standard kernel calculations.
\end{proof}

\begin{proof}[Proof of Theorem~\ref{th:Equivalence_General}]
	To begin with, note that
	\[
	\wh{\tau}_{0,p}(h;\bar\ad) - \htau_{0,p}(h;\hmu) = \sum_{s=1}^S G_s(p),\quad G_s(p) \equiv \sum_{i \in I_s}w_{i,0,p}(h)(\wh\ad_s(Z_i) - \bar\ad(Z_i)).
	\]
	Since $S$ is a fixed number, it suffices to show that $G_s(p)=O_p(t_p)$ for $s \in \{1,...,S\}$. 	
	We analyze the expectation and variance of $G_s(p)$ conditional on $\Xn$ and $(W_j)_{j \in I^c_s}$.  
	We begin with the expectation. It holds with probability approaching one that 	
	\begin{align*}
		\left|\E[G_{s}(p) | \Xn, (W_j)_{j \in I^c_s}]\right| 
		& =  \Big|\sum_{i \in I_s} w_{i,0,p}(h) \E[\wh\ad_{s}(Z_i) - \bar\ad(Z_i)  |X_i, (W_j)_{j \in I^c_s}] \Big| \\
		& \leq \sup_{\ad\in\Tn} \Big|  \sum_{i \in I_s} w_{i,0,p}(h) \E[ \ad(Z_i) - \bar\ad(Z_i) |X_i] \Big|.
	\end{align*}
	Let $m(x;\ad) = \E[\ad(Z_i)-\bar\ad(Z_i)|X_i=x]$.
	Taylor's theorem yields
	\[
	m(X_i;\ad)= m(0;\ad)  + \sum_{j=1}^p \frac{1}{j!}  \partial^j_x m(0;\ad)X_i^j + \frac{1}{(p+1)!}\partial^{p+1}_x m(\wt{x}_{i,p};\ad)X_i^{p+1}
	\]
	for some $\wt{x}_{i,p}$ between $0$ and $X_i$. We analyze the three terms associated with different terms of Taylor's expansion separately. We make use of Lemma~\ref{lemma:X} in each step.
	
	First, using the Cauchy-Schwarz inequality, we obtain that
	\[
	\sup_{\ad\in\Tn} \Big| m(0;\ad)   \sum_{i \in I_s} w_{i,0,p}(h)   \Big| = \sup_{\ad\in\Tn} | m(0;\ad)| O_p((nh)^{-1/2})= O_p(r_n (nh)^{-1/2}).
	\]

	Second, for $j \in \{1,...,p\}$, we have that
	\[
	\sup_{\ad\in\Tn} \Big| \partial^j_x m(0;\ad)   \sum_{i \in I_s} w_{i,0,p}(h)X_i^j   \Big| = \sup_{\ad\in\Tn} | \partial^j_x m(0;\ad) | h^j  O_p((nh)^{-1/2}) =  O_p(h^j(nh)^{-1/2}v_{j,n}).
	\]
	
	Third, we note that
	\begin{align*}
		\sup_{\ad\in\Tn} \Big|   \sum_{i \in I_s} w_{i,0,p}(h) \partial^{p+1}_x m(\wt{x}_{i,p};\ad) X_i^{p+1} \Big|  
		\leq   \sum_{i \in I_s} |w_{i,0,p}(h)X_i^{p+1}| \sup_{\ad\in\Tn} |\partial^{p+1}_x m(\wt{x}_{i,p};\ad)| = O_p(h^{p+1}v_{p+1,n}).
	\end{align*}
	
	Next, we consider the conditional variance. It holds with probability approaching one  that 
	\begin{align*}
		\V\left[G_s(p)| \Xn, (W_j)_{j \in I^c_s}\right] 
		& = \sum_{i \in I_s} w_{i,0,p}(h)^2\V\left[\bar\ad(Z_i)  - \wh\ad_{s}(Z_i)|\Xn, (W_j)_{j \in I^c_s}\right]\\
		& \leq \sup_{\ad\in\Tn}  \sum_{i \in I_s}w_{i,0,p}(h)^2\E[(\bar\ad(Z_i) - \ad(Z_i))^2 |X_i] \\
		& \leq \sup_{\ad\in\Tn} \sup_{x \in \Xh} \E[(\bar\ad(Z_i) - \ad(Z_i))^2 |X_i=x] \sum_{i \in I_s}w_{i,0,p}(h)^2 \\
		& = O_p(r_{n}^{\; 2}  (nh)^{-1} ),
	\end{align*}
	where we use Lemma~\ref{lemma:X} and Assumption~\ref{ass:1stage} in the last step. The conditional convergence then implies the unconditional one \citep[see][Lemma 6.1]{chernozhukov2018double}. 
\end{proof}

\begin{theorem}\label{th:Normality_General}
	Suppose that the assumptions of Theorem~\ref{th:Equivalence_General} hold, Assumption~\ref{ass:reg2} holds, and $\E[M_i(\bar\ad)|X_i=x]$ is $p+1$ times continuously differentiable with an $L$-Lipschitz continuous $(p+1)$-st derivative bounded by $C$ on each side of the cutoff. Then 
	$$
	\sqrt{nh}\, V_p(\bar\ad)^{-1/2} \left( \htau_{0,p}(h; \hmu)  - \tau - h^{p+1} B_{p}   \right) \overset{d}{\to}  \mathcal{N}(0, 1),
	$$
	where, for some kernel constants $\bar{\nu}_p$ and $\bar{\kappa}_p$,
	\begin{align*}
		B_{p}  & =\frac{ \bar{\nu}_p}{(p+1)!}  \left(\partial_x^{p+1} \E[ M_i(\bar\ad) |X_i=x]\big|_{x=0^+} + (-1)^p \partial_x^{p+1} \E[ M_i(\bar\ad) |X_i=x]\big|_{x=0^-}\right) + o_P(1),\\
		V_p(\bar\ad) & =\frac{ \bar{\kappa}_p }{f_X(0)}  \left( \V[ M_i(\bar\ad)|X_i=0^+] + \V[ M_i(\bar\ad)|X_i=0^-] \right).\\
	\end{align*}
\end{theorem}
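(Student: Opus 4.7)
The plan is to establish the result in two stages: first reduce to the infeasible estimator $\wh\tau_{0,p}(h;\bar\ad)$ via Theorem~\ref{th:Equivalence_General}, and then derive the asymptotic distribution of $\wh\tau_{0,p}(h;\bar\ad)$ using standard local polynomial arguments adapted to the triangular-array setting in which the distribution of $M_i(\bar\ad)$ may depend on $n$.

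For the reduction step, Theorem~\ref{th:Equivalence_General} gives $\htau_{0,p}(h;\hmu) - \wh\tau_{0,p}(h;\bar\ad) = O_P(t_p)$ with $t_p = r_n (nh)^{-1/2} + \sum_{j=1}^p v_{j,n} h^j (nh)^{-1/2} + v_{p+1,n} h^{p+1}$. Multiplying by $\sqrt{nh}$, the first group contributes $O_P(r_n) = o_P(1)$, the middle contributes $O_P(v_{j,n} h^j) = o_P(1)$, and the last contributes $O_P(v_{p+1,n} \sqrt{nh}\, h^{p+1})$, which is $o_P(\sqrt{nh}\, h^{p+1})$ since $v_{p+1,n} = o(1)$. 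The last piece is absorbed into the $h^{p+1} o_P(1)$ slack built into the definition of $B_{p,n}$; this is precisely the role of that slack term, ensuring the statement accommodates bandwidth sequences for which $\sqrt{nh}\,h^{p+1}$ fails to vanish.

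For the asymptotic normality of $\wh\tau_{0,p}(h;\bar\ad)$, I would decompose $\wh\tau_{0,p}(h;\bar\ad) - \tau$ into a conditional bias and a stochastic part. Write $g_n(x) = \E[M_i(\bar\ad)|X_i=x]$ and $\varepsilon_i = M_i(\bar\ad) - g_n(X_i)$, so that
\[
\wh\tau_{0,p}(h;\bar\ad) - \tau = \Bigl(\sum_i w_{i,0,p}(h)\, g_n(X_i) - \tau\Bigr) + \sum_i w_{i,0,p}(h)\,\varepsilon_i.
\]
For the bias, use the deterministic polynomial-reproducing property of local polynomial weights, $\sum_i w_{i,0,p}^\star(h) X_i^j = \delta_{0j}$ for $0 \le j \le p$, together with a Taylor expansion of $g_n$ to order $p+1$ on each side of the cutoff; the smoothness hypothesis (with $L$-Lipschitz $(p+1)$-st derivative bounded by $C$) and Lemma~\ref{lemma:X} yield the leading bias $h^{p+1} \bar\nu_p$ times the stated combination of one-sided $(p+1)$-st derivatives, up to an $o_P(h^{p+1})$ remainder that is absorbed into $h^{p+1} B_{p,n}$. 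For the stochastic part, standard kernel calculations combined with Assumption~\ref{ass:reg2}(iii) give $\V\bigl[\sum_i w_{i,0,p}(h)\,\varepsilon_i \,\big|\, \Xn\bigr] = (nh)^{-1} V_p(\bar\ad) (1 + o_P(1))$, and Lyapunov's CLT applied to the array $\{w_{i,0,p}(h)\,\varepsilon_i\}$ (conditional on $\Xn$) delivers asymptotic normality.

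The main obstacle is that $\bar\ad$, and hence the conditional moments $g_n$, $\sigma_n^2(x) = \V[M_i(\bar\ad)|X_i=x]$, and the higher conditional moments, all depend on $n$, so the CLT must be applied in a triangular-array form. This is precisely the reason Assumption~\ref{ass:reg2} is stated with constants $C$ and $L$ independent of $n$: the uniform bound on $(p+1)$-st derivatives controls the bias expansion, the uniform $q>2$ moment condition verifies the Lyapunov condition $\sum_i \E|w_{i,0,p}(h)\varepsilon_i|^q \cdot V_p(\bar\ad)^{-q/2} \to 0$ (using Lemma~\ref{lemma:X}(v) and standard $L^q$ bounds on kernel weights), and the uniform lower bound on $\sigma_n^2$ keeps $V_p(\bar\ad)$ bounded away from zero. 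With these uniform ingredients in place, the conditional CLT transfers to unconditional convergence by a dominated convergence argument, and combining with the reduction step completes the proof.
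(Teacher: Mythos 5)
Your proposal is correct and follows essentially the same route as the paper's own (much terser) proof: reduce to the infeasible estimator via Theorem~\ref{th:Equivalence_General}, split $\wh\tau_{0,p}(h;\bar\ad)-\tau$ into the conditional bias and the stochastic part centered at $\E[\wh\tau_{0,p}(h;\bar\ad)|\Xn]$, and apply a conditional Lyapunov CLT together with standard kernel calculations for $V_p(\bar\ad)$ and the leading bias constant. You additionally spell out details the paper leaves implicit --- the $\sqrt{nh}$-scaling of each term in $t_p$, the role of the $o_P(1)$ slack in $B_{p,n}$, and why the uniform constants in Assumption~\ref{ass:reg2} make the triangular-array CLT go through --- all of which are accurate.
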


\begin{proof}[Proof of Theorem~\ref{th:Normality_General}]
	By the conditional version of Lyapunov's CLT, we obtain that 
	\[
	\text{se}_{0,p}(h;\bar\ad)^{-1}\left(\wh{\tau}_{0,p}(h;\bar\ad) - \E[\wh{\tau}_{0,p}(h;\bar\ad)|\Xn]\right) \to \mc N(0,1).
	\]
	where $\text{se}_{0,p}^2(h;\bar\ad) = \Sum w_{i,0,p}(h)^2 \V[M_i(\bar\ad)|X_i]$.
	By $L$-Lipschitz continuity of $\V[M_i(\bar\ad)|X_i=x]$ in $x$, we obtain that
	\begin{align*}
		\text{se}_{0,p}^2&(h;\bar\ad) = \Sum w_{i,0,p}^-(h)^2 \V[M_i(\bar\ad)|X_i=0^-] + \Sum w_{i,0,p}^+(h)^2 \V[M_i(\bar\ad)|X_i=0^+] + o_p((nh)^{-1}).
	\end{align*}
	It then follows from standard kernel calculations that $nh\,\text{se}_{0,p}^2(h;\bar\ad)-V_p(\bar\ad)=o_P(1)$ and $\E[\wh{\tau}_{0,p}(h;\bar\ad)|\Xn]-\tau = B_ph^{p+1} + o_p(h^{p+1})$ for some constant $B_p$.
\end{proof}

\subsection{Proofs of Theorems~\ref{th:Equivalence}--\ref{th:bandwidth}} 
Theorems~\ref{th:Equivalence} and~\ref{th:Normality} follow directly from the general results in Theorems~\ref{th:Equivalence_General} and~\ref{th:Normality_General} with $p=1$; and Theorem~\ref{th:bandwidth} follows from simple calculations. It remains to prove Theorem~\ref{th:var}. For any $\ad\in\ClassVar$, it holds that
$$
\frac{f_X(0)}{\bar\kappa} V(\ad)  =  \V[Y_i-\mu_0^+(Z_i)|X_i=0^+] + \V[Y_i-\mu_0^-(Z_i)|X_i=0^-] + R(\ad),
$$
where the first two terms on the right-hand side do not depend on $\ad$, and	
$$
R(\ad) =  \V[\mu_0^+(Z_i) -\ad(Z_i) |X_i=0^+] +  \V[\mu_0^-(Z_i) - \ad(Z_i)|X_i=0^-].
$$
Further, it holds that
\begin{align*}
	R(\ad) = R(\ad_0+ \ad - \ad_0 ) & = \V\left[ \frac{1}{2}(\mu_0^+(Z_i)-\mu_0^-(Z_i)) - (\ad(Z_i)-\ad_0(Z_i))  |X_i=0^+ \right]\\
	&\quad +  \V\left[ - \frac{1}{2}(\mu_0^+(Z_i)-\mu_0^-(Z_i)) - (\ad(Z_i)-\ad_0(Z_i))  |X_i=0^- \right] \\
	&  = R(\ad_0) + 2 \V[\ad(Z_i)-\ad_0(Z_i)|X_i=0],
\end{align*}
where in the last step we use the assumption on continuity of conditional covariances. The theorem follows from the above decomposition by taking the difference $V(\ad^{(a)}) - V(\ad^{(b)})$ for arbitrary $\ad^{(a)}$ and $\ad^{(b)}$ in $\ClassVar$.
\qed

\subsection{Proof of Proposition~\ref{prop:fuzzy}}\label{sec::appendixfuzzy}
We first note that 
$$\htheta(h; \wh \ad_Y,\wh \ad_T) - \htheta(h; \bar \ad_Y, \bar \ad_T) = O_P(r_n (nh)^{-1/2}  + v_{1,n} h(nh)^{-1/2} + v_{2,n}h^2 ). $$
This equality is an immediate consequence of Theorem~\ref{th:Equivalence} and an application of the continuous mapping theorem as $|\tau_T|>0$. Further, using a mean-value expansion, it follows that  
$$\htheta(h;  \bar\ad_Y, \bar\ad_T) - \theta = \frac{1}{\tau_T}( \wh \tau_Y(h; \bar \ad_Y ) -\tau_Y ) - \frac{\tau_Y}{\tau_T^2}( \wh \tau_T(h; \bar \ad_T ) -\tau_T ) + \wh \rho( \bar \ad_T,  \bar \ad_Y) $$
with
$$
\wh \rho( \bar \ad_T,  \bar\ad_Y) =\frac{\widehat\tau_Y(h;  \bar \ad_Y)(\widehat\tau_T(h;  \bar \ad_T)-\tau_T)^2}{\widehat\tau_T^*(h; \bar \ad_T)^3}-\frac{(\widehat\tau_Y(h; \bar \ad_Y)-\tau_Y)(\widehat\tau_T(h; \bar \ad_T)-\tau_T)}{ \tau_T^2}, 
$$
where $\wh\tau_T^*(h; \ad_T)$ is some intermediate value between $\tau_T$ and $\wh \tau_T(h; \bar \ad_T)$.
Given our assumptions, it follows that
$$ \wh \rho(\bar \ad_T, \bar \ad_Y) = O_P\big(((nh)^{-1/2}   + h^2)^2\big).$$
Part~(i) follows analogously to  Theorems~\ref{th:Equivalence} and~\ref{th:Normality} and Part~(ii) follows from Theorem~\ref{th:var}.\qed

\section{Details on Section~5.1}\label{sec:Inference}
In this section, we formally show that, under suitable assumptions, existing procedures for bandwidth selection and construction of confidence intervals devised for settings without covariates can be directly applied to the modified data $\{(X_i,M_i(\wh\ad))\}_{i\in[n]}$.

\subsection{Standard Errors}\label{sec:standarderror}
We generalize the nearest-neighbors standard error from the main text to the local polynomial regression of an arbitrary order $p$. Let 
$$ 
\wh{se}^2_{v,p}(h; \hmu)  = \sum_{i =1}^n w_{i, v,p}^2(h)  \wh \sigma_{i}^2(\wh \ad) , \quad 	\wh{\sigma}^2_{i}(\wh\ad) =  \frac{R}{R+1} \Big( M_i(\wh\ad_{s(i)})  -  \frac{1}{R} \sum_{j \in \mc R_i}  M_j(\wh\ad_{s(j)})  \Big)^2,
$$
where  $\mc R_i$ is the set of $R$ nearest neighbors of unit $i$ in terms of their running variable realization on the respective side of the cutoff. Establishing consistency of this standard error requires the following technical assumption on the first-stage estimator, which is implied by our main assumptions, for example,  if  $M_i(\bar\ad)$ is bounded.

\begin{assumption}\label{ass:technical_se_nn}
	For all $s \in [S]$, it holds that $\Sumn  w_{i,v,p}^2(h) \iota_i(\widehat \eta)  = o_P( (nh^{1+2v})^{-1})$ for $0 \leq v \leq p$, where 
	$$ 
	\iota_i(\widehat \eta) =   \sum_{\substack{ (j, l) \in \mc R_i^2\\ (j,l) \notin I^2_{s(i)}}}  \left( (\wh \ad_{s(i)}(Z_i) - \bar \ad(Z_i))  -   (\wh \ad_{s(j)}(Z_j) - \bar \ad(Z_j))\right)(M_i(\bar \ad) -   M_l(\bar \ad)) .	$$
\end{assumption}

\begin{proposition}\label{prop:standarderror}	
	Suppose that Assumptions~\ref{ass:1stage}--\ref{ass:reg2} and~\ref{ass:technical_se_nn} hold. Moreover, suppose that Assumption~\ref{ass:1stage} also holds with $\mc X_h$ replaced by $\widetilde{\mc X}_h$ that is an open set s.t. $ \mc X_h  \subset \widetilde{\mc X}_h $, and $  \sup_{\ad \in \Tn} \sup_{x \in \wt{\mc X}_h}  \E[( M_i(\ad) - \E[M_i(\ad) |X_i])^4| X_i=x]$ is bounded by $B$ for all $n \in \mathbb{N}$. Suppose further that $\E[M_i(\bar\ad) |X_i=x]$ and  $\V[M_i(\bar\ad) |X_i=x]$ are L-Lipschitz continuous on each side of the cutoff. 
	Then for all $0 \leq v \leq p$, it holds that 
	$$
	nh^{1+2v} \left( \hs_{v,p}^2(h; \wh \ad) - se^2_{v,p}(h;\bar\ad)\right) = o_P(1),
	$$
	where $se^2_{v,p}(h;\bar\ad)  = \sum_{i =1}^n w_{i, v,p}^2(h)  \sigma_{i}^2(\bar \ad)$ and ${\sigma_i^2(\bar\ad)= \V [M_i(\bar\ad)|X_i ] } $.
\end{proposition}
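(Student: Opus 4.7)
I would introduce an infeasible nearest-neighbor analogue based on the target adjustment $\bar\ad$,
\begin{equation*}
\wt{se}_{v,p}^2(h;\bar\ad) \equiv \Sumn w_{i,v,p}^2(h)\,\wt\sigma_i^2(\bar\ad),\quad \wt\sigma_i^2(\bar\ad)\equiv\Bigl(M_i(\bar\ad)-\Sumr M_j(\bar\ad)\Bigr)^2,
\end{equation*}
and split the target difference as
\begin{equation*}
\hs_{v,p}^2(h;\wh\ad) - se_{v,p}^2(h;\bar\ad) = \underbrace{\bigl(\hs_{v,p}^2(h;\wh\ad) - \wt{se}_{v,p}^2(h;\bar\ad)\bigr)}_{\Delta_1}+\underbrace{\bigl(\wt{se}_{v,p}^2(h;\bar\ad) - se_{v,p}^2(h;\bar\ad)\bigr)}_{\Delta_2},
\end{equation*}
so that it suffices to show $\Delta_j=o_P((nh^{1+2v})^{-1})$ for $j\in\{1,2\}$. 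Throughout, I use the weight bounds $\Sumn w_{i,v,p}^2(h)=O_P((nh^{1+2v})^{-1})$ and $\Sumn w_{i,v,p}^4(h)=O_P((nh)^{-3}h^{-4v})$, obtained by calculations analogous to those in Lemma~\ref{lemma:X}.

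For $\Delta_2$, I would run the standard consistency analysis for nearest-neighbor variance estimators, conditioning on $\Xn$. Lipschitz continuity of $x\mapsto\E[M_i(\bar\ad)|X_i=x]$ and $x\mapsto\V[M_i(\bar\ad)|X_i=x]$ controls the conditional mean of $\wt\sigma_i^2(\bar\ad)$ so that its weighted sum matches $se_{v,p}^2(h;\bar\ad)$ up to $o_P$-errors; the uniform fourth-moment bound on $M_i(\bar\ad)$, together with the fact that each observation lies in at most $O(R)$ other nearest-neighbor sets, makes the conditional variance of $\Delta_2$ bounded by a constant multiple of $\Sumn w_{i,v,p}^4(h)=o_P((nh^{1+2v})^{-2})$. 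Markov's inequality then closes the step.

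For $\Delta_1$, I use the identity $a^2-b^2=(a-b)^2+2b(a-b)$ with $a=M_i(\wh\ad_{s(i)})-\Sumr M_j(\wh\ad_{s(j)})$ and $b=M_i(\bar\ad)-\Sumr M_j(\bar\ad)$, so that $a-b=-e_i+\Sumr e_j$ with $e_i\equiv\wh\ad_{s(i)}(Z_i)-\bar\ad(Z_i)$. After expanding the square and taking conditional expectations given $\Xn$ and $(W_j)_{j\in I^c_{s(i)}}$, the quadratic piece $\Sumn w_{i,v,p}^2(h)(a-b)^2$ is controlled by the uniform mean-square convergence of $\wh\ad_s$ to $\bar\ad$ on $\wt{\mc X}_h$ hypothesized in the proposition, and hence contributes $O_P(r_n^2(nh^{1+2v})^{-1})=o_P((nh^{1+2v})^{-1})$. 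Opening the double average in the cross-term, a direct computation gives
\begin{equation*}
2\Sumn w_{i,v,p}^2(h)\,b(a-b)=-2R^{-2}\Sumn w_{i,v,p}^2(h)\sum_{(j,l)\in\mc R_i^2}(e_i-e_j)(M_i(\bar\ad)-M_l(\bar\ad)).
\end{equation*}
The portion indexed by $(j,l)\notin I^2_{s(i)}$ equals $-2R^{-2}\Sumn w_{i,v,p}^2(h)\,\iota_i(\wh\ad)$ and is $o_P((nh^{1+2v})^{-1})$ directly by Assumption~\ref{ass:technical_se_nn}; the within-fold portion over $(j,l)\in I^2_{s(i)}$ is handled by conditioning on $(W_k)_{k\in I^c_{s(i)}}$, which renders $\wh\ad_{s(i)}$ deterministic and makes $e_i,e_j$ functions of $Z_i,Z_j$ that are independent of $M_i,M_l$ across distinct indices, and then combining Cauchy--Schwarz with Assumption~\ref{ass:1stage} and the fourth-moment bound.

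The main obstacle I anticipate is the cross-term in $\Delta_1$. Unlike $\Delta_2$, which is by-now-standard, this term couples the first-stage estimation error $e_\cdot$ with the residual $M_\cdot(\bar\ad)$, and only cross-fitting -- through the exclusion $(j,l)\notin I^2_{s(i)}$ -- breaks the dependence for the between-fold pairs. The role of the high-level Assumption~\ref{ass:technical_se_nn} is precisely to package this delicate control into a rate condition that can be verified under mild primitives (e.g., when $M_i(\bar\ad)$ is bounded), thereby avoiding a method-specific argument for each possible first-stage estimator; the within-fold remainder then yields to a routine conditional moment bound exploiting the same cross-fitting independence.
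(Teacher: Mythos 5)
Your proposal is correct and follows essentially the same route as the paper: the same two-step decomposition through the infeasible nearest-neighbor estimator based on $\bar\ad$ (the paper handles your $\Delta_2$ by citing the argument of Theorem~4 in \citet{noack2021bias}, which is the standard analysis you sketch), the same expansion of $\Delta_1$ into a quadratic piece controlled by Assumption~\ref{ass:1stage} and a cross term split into between-fold pairs covered by Assumption~\ref{ass:technical_se_nn} and within-fold pairs handled by conditioning on $(W_k)_{k\in I^c_{s(i)}}$ plus Cauchy--Schwarz. Your reading of the role of cross-fitting and of the high-level Assumption~\ref{ass:technical_se_nn} matches the paper's exactly.
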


We note that Assumption~\ref{ass:technical_se_nn} could be dropped if we were to study a slight variation
of $\wh{se}^2_{v,p}(h; \hmu)$ in which we take the $R$ nearest neighbors of unit $i$ in terms of running variable values \emph{among units in the same fold} to compute $\wh{\sigma}^2_{i}(\wh\ad)$. However, proceeding like this would mean that existing software packages that compute nearest neighbor standard errors would have to be adapted, and could not be applied directly to the modified data $\{(X_i,M_i(\wh\ad))\}_{i\in[n]}$.

\subsection{Confidence Intervals}\label{sec:confidencesets} In this subsection, we discuss three types of confidence intervals for the RD parameter, based on undersmoothing, robust bias correction, and bias-aware critical values, respectively.

\subsubsection{Undersmoothing}
We first consider confidence intervals that are based on an undersmoothing bandwidth of order  $o(n^{-1/5})$. This choice of bandwidth implies that the  smoothing bias shrinks to zero at a faster rate than the standard deviation and can hence be ignored when constructing confidence intervals.  Let
$$ 
CI^{us}_{1-\alpha} = \big[\htau(h; \wh \ad) \pm z_\alpha \hse( h; \wh \ad)\big], 
$$
where $z_\alpha$ is the $(1-\alpha/2)$-quantile of the standard normal distribution. Proposition~\ref{prop::CIus} shows that $CI^{us}_{1-\alpha}$ is asymptotically valid.

\begin{proposition}\label{prop::CIus}
	Suppose that the assumptions of Proposition~\ref{prop:standarderror} hold for $p=1$. If $nh^5=o(1)$, then
	$
	\mathbb P\big(\tau \in CI^{us}_{1-\alpha}\big) \geq 1-\alpha+o(1).
	$	 
\end{proposition}
\subsubsection{Robust bias correction}
 We now adapt the robust bias corrections of \citet{calonico2014robust} to our setting.
To keep the exposition transparent, we focus on the important special case where the bandwidth used to obtain the bias correction is the same as the main bandwidth. In this case, the local linear estimator with a bias correction is numerically equal to the local quadratic estimator (with the same bandwidth), i.e. $ \wh \tau_{0,2}(h; \wh \ad)$. Let
$$ 
	CI^{rbc}_{1-\alpha} = \big[\htau_{0,2}(h; \wh \ad) \pm z_\alpha \hse_{0,2}( h; \wh \ad)\big].
$$
Proposition~\ref{prop::CI_rb} shows that $CI^{rbc}_{1-\alpha}$ is asymptotically valid.

\begin{proposition}\label{prop::CI_rb}
	Suppose that the assumptions of Theorem~\ref{th:Normality_General} and~Proposition~\ref{prop:standarderror} hold for $p=2$. If $nh^7=o(1)$, then
$
\mathbb P\big(\tau \in CI^{rbc}_{1-\alpha}\big) \geq 1-\alpha+o(1).
$		 
\end{proposition}

\subsubsection{Bias-awareness}
We consider a version of the bias-aware approach of \citet{armstrong2018optimal} which adjusts the critical values so as to account for the maximal possible smoothing bias of $\wh \tau(h; \bar \ad)$.
Suppose  that  $|\partial^2_x \mathbb E[M_i(\bar\ad)|X_i=x]|$  is bounded  by some constant $B_M$ on either side of the cutoff. Then it follows from the results of \citet{armstrong2020simple} and our Theorem~\ref{th:Normality} that the asymptotic bias of our covariate-adjusted RD estimator is bounded in absolute value  by $\bar b(h) + o_P(h^2)$, where
$$ \bar b(h) =- \frac{B_M}{2} \sum_{i=1}^n w_i(h) X_i^2 \sgn(X_i).$$ 
For implementation, the smoothness bound $B_M$ can be selected by inspecting the scatter plot of  $\{(X_i,M_i(\wh\ad))\}_{i\in[n]}$; see \citet{kolesar2018discrete}, \citet{imbens2017optimized}, \citet{armstrong2020simple}, and \citet{noack2021bias} for a discussion of various procedures for choosing the smoothness bound.
The proposed confidence interval is
$$ 
CI^{ba}_{1-\alpha} = \left[\htau(h; \wh \ad) \pm z_\alpha(\bar b(h)/\hse_{0,1}( h; \wh \ad)) \,\hse_{0,1}( h; \wh \ad)\right],
$$
where $z_\alpha(r)$ is the $(1-\alpha)$-quantile of $|N(r,1)|$, the folded normal distribution with mean $r$ and variance one. 
Proposition~\ref{prop::CIba} shows that $CI^{ba}_{1-\alpha}$ is asymptotically valid.
\begin{proposition}\label{prop::CIba}
	Suppose that the assumptions of Proposition~\ref{prop:standarderror} hold for $p=1$. If $nh^5=O(1)$, then
	$
	\mathbb P\big(\tau \in CI^{ba}_{1-\alpha}\big) \geq 1-\alpha+o(1).
	$		 
\end{proposition}

We note that the above bias-aware approach accounts for the smoothing bias separately for the regressions on either side of the cutoff, in line with the original proposal of \citet{armstrong2018optimal} for settings without covariates. This procedure does not rely on the asymptotic bias formula in Theorem~\ref{th:Normality}, where the bias contributions of the adjustment terms on both sides cancel out asymptotically.
To exploit this cancellation of asymptotic biases, one could construct a confidence interval using a bound on  $|\partial^2_x \mathbb E[Y_i|X_i=x]|$ instead of $|\partial^2_x \mathbb E[M_i(\bar\eta)|X_i=x]|$ in the expression of the worst-case bias $\bar b(h)$. Such a confidence interval, however, can be expected to be valid only in the pointwise-in-DGP asymptotics.\footnote{Suppose that both $\E[Y_i|X_i=x]$ and $\E[\bar\ad(Z_i)|X_i=x]$ are twice continuously differentiable on either side of the cutoff, with their second derivatives being bounded in absolute value by constants $B_Y$ and $B_\eta$, respectively. Then one can show that the conditional bias $\E[\wh\tau(h;\bar\ad)|\mathcal{X}]-\tau$ is bounded in absolute value uniformly over these function classes by $- (B_M/2) \sum_{i=1}^n w_i(h) X_i^2 \sgn(X_i)$, with $B_M=B_Y+B_\eta$. Under pointwise asymptotics, the contribution of the adjustment term to the bias vanishes in large samples under our assumptions, and in finite samples we expect this ``worst case'' bound to generally be somewhat pessimistic and the contribution to be relatively small. We still recommend using this general bound in practice.}

\subsection{Optimal Bandwidth}\label{sec:bandwidthestimation}
In our Theorem~\ref{th:bandwidth}, we show that the bandwidth that minimizes the approximate mean squared error (AMSE) of our proposed estimator is 
\[
	h_{AMSE} = \left(\frac{V(\bar\ad)}{4B_{\textnormal{base}}^2}\right)^{1/5} n^{-1/5}.
\]
This optimal bandwidth can be consistently estimated by applying the procedure of \citet[][S.2.6]{calonico2014robust} to the modified data $\{(X_i,M_i(\wh\ad_{s(i)}))\}_{i\in[n]}$ using the following three steps.

\noindent\textit{Step 0.} Initial bandwidths.
\begin{itemize}
	\item [(i)] Let $v_n$ be such that $v_n \to 0$ and $n v_n \to \infty$. In practice, one can set
	$	\wh v_n = 2.58 \min\{ S_X, IQR_X/1.349 \} n^{-1/5},$
	where $S_X^2$ and $IQR_X$ denote, respectively, the sample variance and interquartile range of $\{X_i:1\leq i \leq n\}$.
	
	\item [(ii)] Choose $c_n$ such that $c_n \to 0$ and $n c_n^7 \to \infty$. In practice, let
	\begin{align*}
		\wh c_n  = \wh C_n^{1/9} n^{-1/9},\qquad
		\wh C_n  = \frac{7 n v_n^7 \wh{se}^2_{3,3}(v_n;\wh\ad)}{2 \mc B_{3,3}^2\left(\wh\gamma^+_{4,4}(\wh\ad) -  \wh\gamma^-_{4,4}(\wh\ad) \right)^2 },
	\end{align*}
	where  $\wh \gamma^\star_{4,4}(\wh\ad)$ is the coefficient on $(1/4!)X_i^4$ in the  fourth-order global polynomial regression of $M_i(\wh\ad_{s(i)})$ on a constant, $X_i$, $(1/2!)X_i^2$, $(1/3!)X_i^3$, and $(1/4!)X_i^4$, using the data on the respective side of the cutoff, and $\mc B^\star_{v,p}$ for $\star \in \{+,-\}$ is the kernel constant in the leading bias term of $\wh\beta^\star_{v,p}(h;\wh\ad)$.
\end{itemize}

\noindent\textit{Step 1.} Choose a pilot bandwidth $b_n$ such that $b_n \to 0$ and $ n b_n^5 \to \infty $. In practice, use the following estimate of the bandwidth that minimizes the AMSE of the estimates of the second derivative terms in a local quadratic regression:
\begin{align*}
	\wh b_n = \wh B_n^{1/7}n^{-1/7},\qquad
	\wh B_n = \frac{5 n v_n^5 \wh{se}^2_{2,2}(v_n;\wh\ad)}{ 2 \mc B^2_{2,2} \Big( \left( \wh \beta^+_{3,3}(c_n; \wh\ad) + \wh \beta^-_{3,3}(c_n;\wh\ad)\right)^2 + 3 \wh{se}^2_{3,3}(c_n;\wh\ad)    \Big)   }.
\end{align*}

\noindent\textit{Step 2.} Estimate $h_{AMSE}$ by
\begin{align*}
	\wh h_n = \wh H_n^{1/5} n^{-1/5},\qquad
	\wh H_n = \frac{n v_n \wh{se}^2_{0,1}(v_n;\wh\ad)}{ 4 \mc B^2_{0,1} \Big( \left( \wh \beta^+_{2,2}(b_n;\wh\ad) -  \wh \beta^-_{2,2}(b_n;\wh\ad) \right)^2 + 3 \wh{se}^2_{2,2}(b_n;\wh\ad) \Big) }.
\end{align*}

\begin{proposition}\label{prop:bandwidth}
	Suppose that the assumptions of Theorem~\ref{th:Normality_General} and Proposition~\ref{prop:standarderror} hold for $p=3$, $\mc X$ is bounded, $\IP \big[1/C \leq |\wh\gamma^+_{4,4}(\bar\ad)-\wh\gamma^-_{4,4}(\bar\ad)|\leq C ] \to 1$ for some $C>0$, and Assumption~\ref{ass:1stage} holds with $\Xh$ replaced by $\X$. Suppose that $\beta^+_{v}(\bar\ad) + (-1)^{v+1} \beta^-_{v}(\bar\ad)$ is bounded and bounded away from zero for $v \in \{2,3\}$. Then
	$\wh c_n \overset{p}{\to} 0,\, n \wh c_n^{\,7} \overset{p}{\to} \infty,\, \wh b_n \overset{p}{\to} 0,\, n \wh b_n^{\,5} \overset{p}{\to} \infty$, and  
	$\wh h_n/h_{AMSE} \overset{p}{\to} 1.$
\end{proposition}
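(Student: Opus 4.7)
The plan is to show that every data-dependent ingredient entering the selectors $\wh C_n$, $\wh B_n$, $\wh H_n$ is asymptotically equivalent to an oracle version in which the first-stage estimate $\wh\ad_{s(i)}$ is replaced by its population counterpart $\bar\ad$, and then to invoke standard bandwidth-consistency arguments applied to the infeasible RD problem with outcome $M_i(\bar\ad)$. The ingredients come in three types: local polynomial coefficient estimators $\wh\beta^\star_{v,p}(h;\wh\ad)$ for $(v,p) \in \{(2,2),(3,3)\}$; nearest-neighbor variance estimators $\wh{se}^2_{v,p}(h;\wh\ad)$ at the same orders; and the global quartic polynomial coefficients $\wh\gamma^\star_{4,4}(\wh\ad)$. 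For the first two types I would combine a straightforward extension of Theorem~\ref{th:Equivalence_General} to higher-order derivative estimators, which is immediate since the proof uses only weight properties that hold for every $v$ via Lemma~\ref{lemma:X}, with Proposition~\ref{prop:standarderror}. The resulting approximation errors are of lower order than the oracle quantities themselves, which in turn converge to positive, finite limits in probability using the smoothness conditions of Assumption~\ref{ass:reg2} together with the added hypothesis that the derivative jumps $\beta^+_v(\bar\ad) - (-1)^{v+1}\beta^-_v(\bar\ad)$ are bounded and bounded away from zero for $v \in \{2,3\}$.

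The main technical obstacle is the treatment of $\wh\gamma^\star_{4,4}(\wh\ad)$, which, in contrast to the local ingredients, depends on the full sample on each side of the cutoff rather than on a shrinking neighborhood. It is precisely to control this estimator that the proposition strengthens Assumption~\ref{ass:1stage} so that the mean-square convergence of $\wh\ad$ to $\bar\ad$ holds over the entire support $\mc X$, and assumes $\mc X$ to be bounded. Under these conditions the proof structure of Theorem~\ref{th:Equivalence_General} transfers: conditioning on $\Xn$ and on the data outside the relevant fold, both the mean and the variance of $\wh\gamma^\star_{4,4}(\wh\ad) - \wh\gamma^\star_{4,4}(\bar\ad)$ are $o_P(1)$ because the global polynomial weights are bounded in the appropriate $L_1$ and $L_2$ senses on bounded support. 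Combined with the maintained two-sided bound $1/C \leq |\wh\gamma^+_{4,4}(\bar\ad) - \wh\gamma^-_{4,4}(\bar\ad)| \leq C$ with probability approaching one, this gives that $\wh C_n$ is bounded and bounded away from zero in probability, so $\wh c_n$ is of exact order $n^{-1/9}$ in probability, yielding $\wh c_n \overset{p}{\to} 0$ and $n \wh c_n^{\,7} \overset{p}{\to} \infty$.

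With $\wh c_n$ controlled, Step~1 follows from the local results sketched in the first paragraph: each of $\wh\beta^\star_{3,3}(\wh c_n;\wh\ad)$ and $\wh{se}_{3,3}(\wh c_n;\wh\ad)$ differs from its oracle counterpart by an $o_P(1)$ term once the order of $\wh c_n$ is plugged in, and the oracle counterparts converge in probability to positive, finite constants by standard local polynomial theory applied to $M_i(\bar\ad)$. This shows that $\wh B_n$ is bounded and bounded away from zero in probability, so $\wh b_n$ is of exact order $n^{-1/7}$ in probability, and in particular $\wh b_n \overset{p}{\to} 0$ and $n\wh b_n^{\,5} \overset{p}{\to} \infty$. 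Step~2 then proceeds identically: the ingredients $\wh\beta^\star_{2,2}(\wh b_n;\wh\ad)$ and $\wh{se}^2_{0,1}(v_n;\wh\ad)$ entering $\wh H_n$ all converge in probability to the deterministic limits that appear in the definition of $h_{AMSE}$, namely the squared jump in the second derivative of $\E[M_i(\bar\ad)|X_i=x]$ at the cutoff (times the relevant kernel constant) in the denominator and $V(\bar\ad)$ in the numerator, and the continuous mapping theorem delivers $\wh h_n / h_{AMSE} \overset{p}{\to} 1$, completing the proof.
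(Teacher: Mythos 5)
Your proposal is correct and follows essentially the same route as the paper's proof: reduce the problem to showing that the global quartic coefficients, the local polynomial derivative estimators, and the nearest-neighbor standard errors each equal their $\bar\ad$-oracle counterparts up to $o_P(1)$ (using the strengthened Assumption~\ref{ass:1stage} over all of $\mc X$ and boundedness of $\mc X$ for the global regression, the Theorem~\ref{th:Equivalence_General} machinery for the local ones, and Proposition~\ref{prop:standarderror} for the variances), and then chain the rates through Steps 0--2. The paper's argument is exactly this, stated as three consistency claims plus the standard-error result, so no substantive difference remains to comment on.
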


\subsection{Proofs of Propositions~\ref{prop:standarderror}--\ref{prop:bandwidth}}

\subsubsection{Proof of Proposition~\ref{prop:standarderror}} To begin with, note that standard kernel calculations show that: 
\begin{enumerate*}[ label=(\roman*)]
	\item $\sum_{i \in [n] } w_{i,v,p}(h)^2 = O_P( (nh^{1+2v})^{-1})$ and
	\item $\max_{i \in [n]} w_{i,v,p}(h)^2  = o_P( (nh^{1+2v})^{-1})$.
\end{enumerate*}
The proof of Proposition~\ref{prop:standarderror} then requires showing that $\hs_{v,p}^2( h; \wh\ad )$ is asymptotically equivalent to the following infeasible version of itself, which uses the deterministic function $\bar\ad$. Let $r=R/(R+1)$:
\[
 \hs_{v,p}^2( h; \bar\ad ) =	\Sumn w_{i,v,p}^2(h) r \bigg(M_i(\bar \ad)  -  \Sumr M_j(\bar \ad)\bigg)^2.
\]
Using arguments as in the proof of  Theorem~4 in \citet{noack2021bias}, one can show that  $\hs_{v,p}^2( h; \bar\ad ) -  \text{se}_{v,p}^2( h; \bar\ad )  = o_P((nh^{1+2v})^{-1})$.
It therefore remains to show that $\hs_{v,p}^2( h; \wh\ad ) -  \hs_{v,p}^2( h; \bar\ad ) = o_P( (nh^{1+2v})^{-1})$. We express this difference as the sum of terms that are linear in $M_i(\wh \ad_{s(i)}) - M_i(\bar \ad) = \bar \ad(Z_i) - \wh \ad_{s(i)}(Z_i)$ and a quadratic remainder:
	\begin{align*}
		&\hs_{v,p}^2( h; \wh\ad ) -  \hs_{v,p}^2( h; \bar\ad ) \\   
		&\;=	2 \Sumn w_{i,v,p}^2(h) r \bigg(M_i(\bar \ad)  -  \Sumr M_j(\bar \ad)\bigg) \bigg(M_i(\wh \ad_{s(i)}) - M_i(\bar \ad)  -  \Sumr (  M_j(\wh \ad_{s(j)}) -  M_j(\bar \ad))\bigg)\\
		&\;\quad + \Sumn w_{i,v,p}^2(h) r \bigg(M_i(\wh \ad_{s(i)}) - M_i(\bar \ad)  -  \Sumr (  M_j(\wh \ad_{s(j)}) -  M_j(\bar \ad))\bigg)^2  \\
		&\;\equiv 2 A_1 + A_2.
	\end{align*} 

We first consider $A_2$. Let $C$ denote a generic constant that might change from line to line. It holds that
	\begin{align*}
	\frac{1}{C}	A_2 &  \leq   \sum_{i=1}^n  w_{i,v,p}^2(h) \bigg( (\wh \ad_{s(i)}(Z_i) - \bar \ad(Z_i) )^2  +   \Sumr (\wh \ad_{s(j)}(Z_j) - \bar \ad(Z_j))^2  \bigg)  \\
	& \leq   \sum_{i=1}^n  \bigg( w_{i,v,p}^2(h) + \frac{	C}{R} \sum_{j:\; i \in R_j}w_{j,v,p}^2(h)\bigg)  (\wh \ad_{s(i)}(Z_i) - \bar \ad(Z_i) )^2     \\
	& = \sum_{s \in [S]} \sumI \bigg( w_{i,v,p}^2(h) + \frac{	C}{R} \sum_{j:\;   i \in R_j}w_{j,v,p}^2(h)\bigg)  (\wh \ad_{s(i)}(Z_i) - \bar \ad(Z_i) )^2  \\
	& \equiv \sum_{s \in [S]} A_{2,s}.
	\end{align*}
For all $s \in [S]$,  it holds with probability approaching one that
\begin{align*}
\mathbb E[& 	A_{2,s}    | \Xn, \{W_i\}_{i \in I_s^c}]  \\	
	&\leq 
	\sumI  \left( w_{i,v,p}^2(h) + \frac{	C}{R} \sum_{j:\; i \in \mc R_j}w_{j,v,p}^2(h)\right)      \sup_{\ad \in \Tn} \sup_{x \in \mc X_h} \E\left[ (\ad(Z_i)-\bar\ad(Z_i))^2|X_i=x\right]\\
	&\leq 	C \Sum  w_{i,v,p}^2(h)    \sup_{\ad \in \Tn} \sup_{x \in \mc X_h} \E\left[ (\ad(Z_i)-\bar\ad(Z_i))^2|X_i=x\right]      =  O_P((nh^{1+2v})^{-1} r_n^2 ).
\end{align*}
	As $S$ is finite and $A_{2,s}$ is a positive random variable, it follows that $A_2=o_P((nh^{1+2v})^{-1})$.
	
To show that $A_1$ is of order $o_P((nh^{1+2v})^{-1})$, we separate the terms involving the nearest neighbors in the fold of unit $i$ and those that involve at least one neighbor from a different fold. Specifically, we have that:
\begin{align*}
	A_1	 & = \frac{r}{R^2}	\Sumn  w_{i,v,p}^2(h) \bigg(  \sum_{j, l\in \mc R_i} (M_i(\bar \ad) -   M_l(\bar \ad))  \left( (\wh \ad_{s(i)}(Z_i) - \bar \ad(Z_i))  -   (\wh \ad_{s(j)}(Z_j) - \bar \ad(Z_j))\right) \bigg)\\
& =	\frac{r}{R^2}	\Sumn  w_{i,v,p}^2(h) \left(  \sum_{\substack{ (j, l) \in \mc R_i^2 \\ (j,l) \notin I^2_{s(i)}}}  (M_i(\bar \ad) -   M_l(\bar \ad))  \left( (\wh \ad_{s(i)}(Z_i) - \bar \ad(Z_i))  -   (\wh \ad_{s(j)}(Z_j) - \bar \ad(Z_j))\right) \right)    \\
	& \quad +  \frac{r}{R^2}  \sum_{s \in [S]}	\sumI  w_{i,v,p}^2(h) \bigg(  \sum_{j, l\in \mc R_i \cap I_s}  (M_i(\bar \ad) -   M_l(\bar \ad))  \left( (\wh \ad_{s(i)}(Z_i) - \bar \ad(Z_i))  -   (\wh \ad_{s(j)}(Z_j) - \bar \ad(Z_j))\right) \bigg)   \\		
& \equiv A_{1,1} + \frac{r}{R^2}\sum_{s \in [S]} A_{1,2,s}. 
	\end{align*}
By Assumption~\ref{ass:technical_se_nn}, it holds that $A_{1,1}=o_P((nh^{1+2v})^{-1})$. For all $s \in [S]$, it holds with probability approaching one that 
\begin{align*}
\frac{1}{C} &\mathbb E[ |	A_{1,2,s}   | | \Xn, \{W_i\}_{i \in I_s^c}]  \\	
& \leq \sumI  w_{i,v,p}^2(h) \sum_{j, l\in (\mc R_i \cap I_s)\cup \{i\}}    \mathbb E[| (M_i(\bar \ad) -   M_l(\bar \ad))  (\wh \ad_{s(j)}(Z_j) - \bar \ad(Z_j))|   | \Xn, \{W_i\}_{i \in I_s^c}]   \\ 
& \leq  \sumI  w_{i,v,p}^2(h) \sum_{j, l\in (\mc R_i \cap I_s)\cup \{i\}}  \sup_{\ad \in \mc T_n}   \mathbb E[|  (M_i(\bar \ad) -   M_l(\bar \ad)) ( \ad(Z_j) - \bar \ad(Z_j)) | | \Xn]   \\
& \leq  \sumI  w_{i,v,p}^2(h) \sum_{j, l\in (\mc R_i \cap I_s)\cup \{i\}}  \left( \mathbb E[  (M_i(\bar \ad) -   M_l(\bar \ad))^2   | \Xn]   \sup_{\ad \in \mc T_n}   \mathbb E[ ( \ad(Z_j) - \bar \ad(Z_j))^2  | \Xn]  \right)^{1/2} \\
&= O_P((nh^{1+2v})^{-1}r_n),
\end{align*}
where the last equality follows from Assumption~\ref{ass:1stage} and the assumption of bounded second moments. 
Hence, $A_{1,2,s} = o_p((nh^{1+2v})^{-1})$, which concludes this proof. \qed

\subsubsection{Proof of Proposition~\ref{prop::CIus}}
The validity of the CI follows directly from the asymptotic normality of the local linear estimator established in Theorem~\ref{th:Normality_General} and the fact that the standard error is consistent. \qed

\subsubsection{Proof of Proposition~\ref{prop::CI_rb}}
Validity of the CI follows directly from asymptotic normality of the local quadratic estimator  established in Theorem~\ref{th:Normality_General} and the fact that the standard error is consistent. \qed

\subsubsection{Proof of Proposition~\ref{prop::CIba}}
Validity of the CI follows directly from asymptotic normality of the local linear estimator established in Theorem~\ref{th:Normality_General}, the fact that the standard error is consistent, and that the asymptotic bias is bounded in absolute value by $\bar b(h) + o_P(h^2)$. \qed

\subsubsection{Proof of Proposition~\ref{prop:bandwidth}}
The proposition follows, using the consistency of the standard error established in Proposition~\ref{prop:standarderror}, if the following claims hold:
		\begin{enumerate}[label=(\roman*)]
		\item $\wh \gamma^\star_{4,4}(\wh\ad) - \wh\gamma^\star_{4,4}(\bar\ad) = o_P(1)$,
		\item $\wh \beta^+_{3,3}(c_n;\wh\ad) + \wh \beta^-_{3,3}(c_n;\wh\ad) = \beta^+_{3}(\bar\ad) + \beta^-_{3}(\bar\ad) +o_P(1)$,
		\item $\wh \beta^+_{2,2}(b_n;\wh\ad) - \wh \beta^-_{2,2}(b_n;\wh\ad) = \beta^+_{2}(\bar\ad) - \beta^-_{2}(\bar\ad) +o_P(1)$.
	\end{enumerate}
\textit{Part (i).} First, note that 
	$$
	\wh \gamma^\star_{4,4}(\wh\ad) -  \wh \gamma^\star_{4,4}(\bar\ad) = e_5' \left(\Sum \wt X_{4,i}^\star {\wt X_{4,i}^{\star^{\top}}} \right)^{-1} \Sum \wt X_{4,i}^\star (\bar\ad(Z_i) - \wh\ad_{s(i)}(Z_i)),
	$$
	where $\wt X^+_{4,i} = \wt X_{4,i} \1{X_i \geq 0} $ and $\wt X^-_{4,i} = \wt X_{4,i} \1{X_i<0} $. Further, for $s \in [S]$, we have that 
	\[
	\Big| \frac{S}{n}\sum_{i \in I_s} X_i^j (\bar\ad(Z_i) - \wh\ad_{s}(Z_i)) \Big| \leq \sqrt{ \frac{S}{n}\sum_{i \in I_s} X_i^{2j} } \sqrt{ \frac{S}{n}\sum_{i \in I_s} (\bar\ad(Z_i) - \wh\ad_{s}(Z_i))^2}.
	\]
	Note that, with probability approaching one,
	\begin{align*}
		\E\left[ \frac{S}{n}\sum_{i \in I_s} (\bar\ad(Z_i) - \wh\ad_{s}(Z_i))^2 \Big| \mathbb X_n, (W_j)_{j\in I_s^c} \right] & \leq  \sup_{\ad \in \mc T_n} \E\left[ \frac{S}{n}\sum_{i \in I_s} (\bar\ad(Z_i) - \ad(Z_i))^2 \Big| \mathbb X_n \right] \\
		& \leq \sup_{\ad \in \mc T_n}  \sup_{x \in \mc X}  \E[(\bar\ad(Z_i) - \ad(Z_i))^2|X_i=x ] = o(1).
	\end{align*}
	It follows that $\Big| \frac{S}{n}\sum_{i \in I_s} X_i^j (\bar\ad(Z_i) - \wh\ad_{s(i)}(Z_i)) \Big|=o_p(1)$. Since $\X$ is bounded, the claim follows.
	
	\noindent\textit{Part~(ii) and~(iii).} Using steps as in the proof of Theorem~\ref{th:Equivalence_General}, for $p\in\{2,3\}$, we obtain that $\wh \beta^\star_{p,p}(h;\wh\ad) - \wh \beta^\star_{p,p}(h; \bar\ad) = o_P(1)$. Moreover, under the assumptions made, $\wh\beta^\star_{p,p}(h;\bar\ad) - \beta^\star_{p}(\bar\ad) = O_P(h +(nh^{1+2p})^{-1/2})$. The claims follow using the conditions on $b_n$ and $c_n$.\qed

\section{Details on the Empirical Analysis}\label{sec:appendix_empirical_results}

In this section, we provide additional details on our empirical analysis described in Section~\ref{sec:practical_performance}.

\subsection{Data Collection}
We conducted an extensive literature search in order to document how covariates are used in empirical RD designs and to collect data sets on which to compare our proposed method to the existing approaches. We focused on the publications in AER, AER Insights, AEJ: Applied Economics, AEJ: Economic Policy, and AEA Papers and Proceedings between 2018 and 2023. Starting from a Google Scholar search for the keywords ``regression discontinuity'',\footnote{A majority of papers found through the Google Scholar search did not
	conduct an original RD analysis, but only cited other RD papers, and were hence excluded.} we first identified 74 articles that appeared to fit into our theoretical framework,\footnote{We
	excluded geographic RD designs where boundary fixed effects were included as part of the identification strategy, and a small number of other nonstandard RD analyses where the outcome variable was measured at a higher level of aggregation than
	the running variable or a donut design was used.} and then retained those 16 papers for which the journal's replication package contained all the data used in the empirical analysis. In 14 of these papers, covariates were used in at least one of the reported RD regressions, while in two papers the available covariates were used only for balance checks but could in principle have been used in the RD regressions, too. For each paper, we identified the main specification (or a version thereof) that includes covariates. These specifications often involve multiple outcomes or running variables, which yielded a total of 56 specifications. The details on all of them are given in Table~\ref{table::overview_of_papers} in the Online Supplement. In our reanalysis of these papers, we focus on these main specifications. In the two cases where only a no covariates RD analysis is reported, we included the covariates that were used for covariate balance checks.

\subsection{Implementation Details}
We apply our flexible adjustment RD estimator proposed in Section~\ref{sec:Covariate_adj}, and we contrast it with the no covariates, conventional linear, and cross-fitted localized linear adjustment RD estimators. In general, the flexible adjustment is implemented as an ensemble of eight learners listed in Section~\ref{sec:proposed_method} and we use 5-fold cross-fitting with $B=25$ data splits. For three specifications with more than 100,000 observations, we speed up the computations by restricting the set of methods used to the local versions of machine learning methods and using 2-fold cross-fitting with $B=5$ data splits. For one specification where the number of observations times the number of covariates exceeds 500,000,000, we use only the local version of random forest as our flexible adjustment and consider 2-fold cross-fitting with $B=1$ data split. We used the rule of thumb of \cite{imbens2017optimized} for the smoothness bound. This choice was dictated by practical considerations, as it would not be possible to separately discuss the choice of smoothness bound for each of the 56 specifications. While one can argue whether
the resulting smoothness bound is always appropriate, the qualitative conclusions about the relative reductions in the confidence interval length are not too sensitive to the choice of the smoothness bound.

\section*{Acknowledgments}
We thank Sebastian Calonico, Michal Koles{\'a}r, Thomas Lemieux,  Jonathan Roth, Vira Semenova, Stefan Wager, Daniel Wilhelm, Andrei Zeleneev, and numerous conference and seminar participants  for  helpful comments and suggestions. We thank Tobias Gro{\ss}b{\"o}lting and Merve {\"O}\u{g}retmek  for excellent research assistance. 
The authors gratefully acknowledge financial support by the European Research Council (ERC) through grant SH1-77202.
The second author also gratefully acknowledges support from the European Research Council (ERC) through grant SH-1852332. 
The authors gratefully acknowledge access to the Marvin cluster of the University of Bonn.

\singlespacing
\bibliography{bibl}    

\end{document}